\titlespacing{\section}{0pt}{2ex}{1ex}
\titlespacing{\subsection}{0pt}{0.75ex}{0.5ex}
\titlespacing{\subsubsection}{0pt}{0.5ex}{0ex}
\newtheorem*{theorem*}{Theorem}
\newtheorem{theorem}{Theorem}
\newtheorem{corollary}[theorem]{Corollary}
\newtheorem{lemma}[theorem]{Lemma}
\newtheorem{definition}{Definition}
\newcommand{\Circ}{{\cal C}\xspace}
\newcommand{\setrobots}{{\cal R}\xspace}
\newcommand{\segment}[1]{\ensuremath{\overline{#1}}\xspace}
\begin{document}

\title{\bf Fault-Induced Dynamics \\
of Oblivious  Robots on a Line 
\thanks{This work has been supported in part by the
Natural Sciences and Engineering Research Council of Canada
through the Discovery Grant program;
by Prof. Flocchini's University Research Chair.}
}

\author{Jean-Lou De Carufel $^\dagger$,
Paola Flocchini$^\dagger$
\medskip\\
\small
$^\dagger$ University of Ottawa, Canada. {\tt  \{jdecaruf,paola.flocchini\}@uottawa.ca}
} 

\date{}

\maketitle
\thispagestyle{empty}
 \begin{abstract}
 The study of computing in presence of faulty robots in the {\sc Look-Compute-Move} 
 model has been the object of extensive investigation, typically 
 with the goal of designing  algorithms tolerant to as many faults as possible.
 
 In this paper, we initiate a new line of investigation on the presence of  faults,  
 focusing on a rather different issue.
 We are interested in understanding the dynamics of a group of robots  when they execute 
 an algorithm designed for a fault-free environment,  in presence of some undetectable  
 crashed robots.
We start this   investigation focusing on  the classic point-convergence algorithm 
by Ando et al. \cite{AnOaSuYa99}  for robots     with limited visibility, 
in   a  simple  setting (which already presents serious challanges): the 
 robots operate  fully synchronously on a line,  and at most two of them are faulty.
Interestingly, and perhaps surprisingly,   the presence of faults induces the robots to perform some form of {\em scattering}, 
rather than {\em point-convergence}. In fact, we discover that  they arrange  themselves inside the segment 
delimited by the two faults  in   interleaved sequences of equidistant robots. 
The structure that they form has a  hierarchical nature:    robots organize themselves in groups where
a  group  of some level    converges to  an equidistant distribution  
%between the first and the last robots  of that group, 
 only  after all groups of lower  levels have converged.

This is the first study on the unintended  dynamics of oblivious robots induced by the presence of  faults.

   \end{abstract}

\section{Introduction}
  %!TEX root = Fault-InducedDynamics.tex 

Consider a group of   robots represented as points, which   operate in a continuous space according to 
the   {\sc Look}-{\sc Compute}-{\sc Move} model  \cite{FloPS12}:
when active, a robot {\sc Looks} the environment obtaining a snapshot of the positions of the other robots, 
it {\sc Computes} a destination point on the basis of such a snapshot, and it {\sc Moves} there.
As typically assumed by the model, the robots   are \emph{anonymous} (i.e., they are  identical), 
\emph{autonomous} (without central or external control),
\emph{oblivious} (they have no memory of past activations),
 \emph{disoriented } (they do not agree on a  common  coordinate systems),
  \emph{silent } (they have no means of explicit communication). 
% Moreover, they have \emph{local visibility}; that is, they can perceive the environment only up  to a certain distance,   the same for all robots.  
% We consider a fully synchronous  setting ({\sc FSynch}), where all robots are activated simultaneously at discrete time steps and, at each time $t$, they all {\sc Look}, they {\sc Compute}, and they {\sc Move} reaching their destination at time $t+1$, when they are all activated again.
These  systems of autonomous  robots have been extensively investigated 
under different assumptions on the various model parameters (different levels of synchrony, 
level of agreement on the coordinate system, etc.),
and most algorithms in the literature are designed for  fault-free groups of robots  (e.g., see \cite{CieFPS12,DieLP08,DP09,FPSV15,FloPS08,FYOKY15,IzuSK+12,YamS10,YUKY15}).

There are several studies that consider the presence of  faults:
%faulty robots: 
crashes (robots that are never activated) or byzantine (robots that behave differently than intended). The goal, in these cases, has   been to design  fault-tolerant  algorithms focusing on the  maximum amount of faults that  can be tolerated for a solution to exist in a given model (e.g., see  \cite{AP06,ABCTU13,BDT13,BGT10,BT15b,DGMR06}). For a detailed  account of the current investigations see \cite{DGMR06}.

In this paper, we consider a rather different question in presence of faulty robots that has never been asked before. 
Given an algorithm  designed to achieve
a certain global goal by a group of fault-free robots, what is the behaviour of the robots in presence of crash faults ?
%, i.e.,  when some of the robots  are faulty and are never activated ? 
Clearly, in most cases, the original  goal is not achieved, but the theoretical  interest is in   characterizing the dynamics of the non-faulty robots induced by the presence of the faulty ones,   from  arbitrary initial configurations. 

%This line of investigation presents many obvious difficulties. What is perhaps surprising is that even simple settings turns out to be  
% quite challenging to study, because  the presence of faults can give rise to intricate dynamics.
We start this new line of investigation  focusing on  the classic point-convergence  algorithm 
by Ando et al. \cite{AnOaSuYa99} for robots with limited visibility, 
and  considering   one of the simplest possible settings, which already proves to be challenging: 
fully synchronous robots ({\sc FSynch})  moving in a 1-dimensional space (a line), in presence of at most two faults. 
 %Even such a simple setting turns out to be   quite challenging to study, because  the presence of faults   gives rise to intricate dynamics.
%
% the classic point-convergence  algorithm by Ando et al. \cite{AnOaSuYa99}; however, up to   two  robots are    
% permanently faulty and do not participate in the execution, while the others are  
%unaware of the faulty ones.
%
  In a line, the convergence  algorithm prescribes each robot to move to the center of the two farthest visible robots and, in absence of faults, starting from a configuration where the robots'  ``visibility graph" is connected, the robots are guaranteed to converge toward a point.  It is not difficult to see that with a single fault in the system,   the robots successfully converge toward the faulty robot. The presence of multiple faults, however,  gives rise to intricate dynamics, and the  analysis of the robots behaviour  is  already quite complex with just two  faults.

Interestingly, and perhaps surprisingly,   the presence of faults induces the robots to perform some form of {\em scattering}, 
rather than {\em gathering}. In fact, we discover that they arrange  themselves inside the segment 
delimited by the two faults  in   interleaved sequences composed of equidistant robots. 
The structure that they form has a  hierarchical nature:    robots organize themselves in groups where
a  group  of some level     converges to  an equidistant distribution  between the first and the last robots 
of that group,  only  after all groups of lower  levels have converged.
Also interesting to note is the rather different dynamics that arises   
when moving  to the middle between two  robots, depending on the choice of the robots:
when considering the  {\em closest}  neighbours,  the result is an equidistant distribution  
(scattering algorithm of  \cite{DP09}),  when instead selecting the   {\em farthest}  robots   the result is a 
more complex structure of sequences of robots, each converging to an  equidistant  distribution.
%where  robots scatter in a segment.
%%(which, in our case,  is delimited by  the two faulty robots).
%In the scattering algorithm of  \cite{DP09}    robots   move  toward the centre of the two  {\em closest} neighbours and converge to an equidistant configuration,
%while in the   algorithm of  \cite{AnOaSuYa99}, they  move toward the centre of the two  {\em farthest} visible neighbours.  
%It is quite  interesting to notice the difference made by the  different choice of neighbourhood in the scattering pattern achieved by the robots,
%when considering the two closest robots, the group converges towards an equidistant configuration, 
%when considering instead  the two farthest robots, they     form an  intricate combination of interleaving equidistant sequences. 
The main difficulty of our analysis is to show that the robots indeed  form  this  special combination 
of   sequences:  the convergence of each sequence is then derived from  a generalization of the result by \cite{DP09}.

Finally observe that the  2-dimensional case  has a rather different nature. In fact, in contrast to the 1-dimensional setting, where any initial configuration converges toward a pattern, when robots move on the plane  oscillations are possible, even with just two faults. The investigation  of this case is left for future study.

\section{Preliminaries}
%!TEX root = Fault-InducedDynamics.tex 

\subsection{Model and Notation}

Let $X$ denote a set of identical robots moving on a line,     simultaneously activated in synchronous time steps according to the {\sc Look}-{\sc Compute}-{\sc Move} model.  At each activation, the robots ``see" the positions of the ones that are visible to them   (each  robot  can see up to a fixed distance $V$), they all compute a destination point, and they move  to that point.
The robots are oblivious in the sense that the computation at time $t$ solely depends on the positions of the robots perceived at that step.
We assume that two  robots are permanently faulty and  do not participate in any activity; their faulty status, however, is not visible  and they appear identical to the others. 

Let $x$ denote robot a $x \in X$ and $x(t)$ its position at time $t$ with respect to the leftmost faulty robot, 
which is considered to be at position $0$ on the real axis.
Let $X(t)=\{x_0(t), x_1(t), ..., x_n(t)\}$ denote the {\em configuration} of robots at time $t$.
 With an abuse of notation $x(t)$ may indicate both the robot itself and its position at time $t$.
 Robots do not necessarily occupy distinct positions. For instance we might have $x_i(t) = x_j(t)$ where $0\leq i,j\leq n$ are two different indices. 
Note however that non-faulty robots in the same position behave  in the same way  and can be considered as a single one. 
Indeed, when non-faulty  robots end up in the same position, we say that they  ``merge"
 and from that moment on they will be considered as one.

Throughout the paper, we suppose that $x_0$ is the leftmost faulty robot and $x_n$ is the rightmost faulty robot. Therefore,
for all $t\geq 0$, we have $x_0(t) = 0$ and $x_n(t)$ is equal to some positive fixed position on the real axis.
 
We denote the distance between robots $x$ and $y$ at time $t$
by $|x(t)-y(t)|$.  
We denote by $[\alpha,\beta]$ the interval of real numbers starting at $\alpha\in\mathbb{R}$ and ending at $\beta\in\mathbb{R}$, where $\alpha \leq \beta$.
Let $N(x(t))$ be the set of robots visible by $x$ at time $t$.
In other words, for all $y \in N(x(t))$, $|x(t)-y(t)| \leq V$.
Let $r(x(t))$ (respectively $l(x(t))$)
denote the rightmost (respectively the leftmost)
visible robot from robot $x$ at time $t$.

 We say that a configuration of robots $X=\{x_0, x_1, ..., x_n\}$  converges to a pattern $P = \{p_0, p_1, ..., p_n\}$ if
for all $0\leq i \leq n$, $x_i(t)\rightarrow p_i$ as $t\rightarrow\infty$.

\subsection{Point-Convergence}

A classical problem for oblivious robots is  gathering: the robots, initially placed in arbitrary positions, need to find themselves on the same   point, not  established a-priori.
The convergence version of the problem requires the robots to converge toward a point. A solution to this problem is given by the well known   algorithm by Ando et al. \cite{AnOaSuYa99}.  The algorithm achieves convergence to a point, not only in synchronous systems, but  also when at each time step, only a subset of the robots is activated (semi-synchronous scheduler  {\sc SSynch})), as long as every robot is activated infinitely often.

\begin{center}

\fbox{
\begin{minipage}{12cm}
{\sc  Convergence2D} \cite{AnOaSuYa99} (for robot $x_i$) \\
%Assumptions: \SSYNC.
\vspace{-0.5cm}
\begin{itemize}
%\item[1.] If $\setrobots_i(t)=\{x_i\}$, then gathering is completed.
\item  $\forall x_j \in N(x_i) \setminus\{x_i\}$,
\begin{itemize}
\item  $d_j := dist(x_i, x_j)$,
\item $\theta_j :=c_i \widehat{x_i} x_j$,
\item  $l_j :=(d_j/2)\cos \theta_j + \sqrt{(V/2)^2-((d_j/2)\sin
\theta_j)^2}$,
\end{itemize}
\item $limit := \min_{x_j \in \setrobots_i(t)\setminus\{x_i\}} \{l_j\}$,
\item  $goal := dist(x_i, c_i)$,
\item $D := \min \{goal,limit\}$,
\item $p :=$ point on $\segment{x_ic_i}$ at distance $D$ from $x_i$.
\item Move towards  $p$.
\end{itemize}
\end{minipage}
}

 \end{center}

Robots are initially placed in arbitrary positions in a 2-dimensional space,  with  limited visibility.
Let $SC_i(t)$ denotes 
the smallest enclosing circle  of  
the  positions of   robots
in $ {\cal R}(t)$ seen by $x_i$; let $c_i(t)$ be the center of $SC_i(t)$.  
%Let $SC_r(t)$ denote
%the smallest enclosing circle  of  
%the robots seen by   $r$ at time $t$, and 
%let $c_r(t)$ be the center of $SC_r(t)$.  
According to the algorithm,   $x_i$   moves
toward $c_i(t)$, but only up to a certain distance. 
Specifically, 
%if $r$ does not see any robot other than
%itself, then $r$ does not move. Otherwise,
 its
 destination  is  the point   on the segment
$\overline{{x_i(t)}{c_i(t)}}$ that is closest to $c_i(t)$ and that
satisfies the following condition:
For every robot $x_j \in N(x_i(t))$, $p$ lies in the disk $\Circ_i$
whose center is the midpoint    of $x_i(t)$ and $x_j(t)$, and whose
radius is $V/2$.
This condition ensures that $x_i$ and $x_j$ will
still be visible after the movement of  $x_i$, and possibly of $x_j$.  

%
%%FIGURA
%\epsFigi{c_exi}{c_exi.pstex_t}{From \cite{FloPS12}:  Notation for algorithm {\sc Convengence} \cite{AnOaSuYa99}.}{ht}
%%FIGURA

%In this paper, we are interested in studying the behaviour of the robots if they execute this algorithm but in presence of   permanently faulty robots that do not participate in the execution.

\paragraph {\bf The 1-Dimensional Case.}
Consider now the same algorithm in the  particular case of a   one-dimensional setting where the space where the robots can move is a line.
In this setting, the algorithm ({\sc Convergence1D}) becomes quite simple because the smallest enclosing circle of the visible robots is the segment delimited by the two farthest apart robot, and a robots moves to occupy the mid-point between those two robots.

\begin{center}
\fbox{
\begin{minipage}{12.5cm}
{\sc Convergence1D} (for robot $x_i$ at time step $t$)
\begin{itemize}
\item Let $l(x_i(t))$ and $r(x_i(t))$ be the farthest visible robots.
\item  move to  the midpoint between $l(x_i(t))$ and $r(x_i(t))$.
\end{itemize}
\end{minipage}
}
\end{center}

\begin{theorem}\cite{AnOaSuYa99}\
\label{thm:Ando}
Executing Algorithm {\sc Convergence1D} in {\sc FSynch} or  {\sc SSynch}, the robots converge to a point.
\end{theorem}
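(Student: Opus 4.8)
The plan is to prove Theorem~\ref{thm:Ando} by isolating an easy monotonicity fact, two structural invariants that are special to the line, and then a consensus argument that settles the one genuinely delicate point. Throughout I assume, as in the discussion above, that the initial visibility graph (robots joined by an edge when within distance $V$) is connected, since otherwise distinct components cannot converge to a common point. Write $a(t)=\min_i x_i(t)$ and $b(t)=\max_i x_i(t)$ for the positions of the extreme robots, and $D(t)=b(t)-a(t)$ for the diameter. First I would observe that every robot moves to the midpoint of two visible robots, both lying in $[a(t),b(t)]$; hence every new position lies in $[a(t),b(t)]$, so $a(t)$ is non-decreasing, $b(t)$ is non-increasing, and $D(t)$ is non-increasing and bounded below by $0$. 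Thus $D(t)\to D^*$ for some $D^*\ge 0$, and the whole content of the theorem is the claim that $D^*=0$.

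Next I would record two facts peculiar to the one-dimensional setting that make the analysis far cleaner than the planar case. First, \emph{order is preserved}: if $x_i(t)\le x_j(t)$ then $l(x_i(t))\le l(x_j(t))$ and $r(x_i(t))\le r(x_j(t))$, because the leftmost (resp.\ rightmost) robot within distance $V$ is monotone in the observer's position; averaging the two endpoints then gives $x_i(t+1)\le x_j(t+1)$, so trajectories never cross (they may only merge). Second, \emph{edges persist}: if $x_i(t)\le x_j(t)$ with $x_j(t)-x_i(t)\le V$, then $x_j(t)$ is visible from $x_i(t)$ and conversely, whence $r(x_i(t))\ge x_j(t)$ and $l(x_j(t))\le x_i(t)$, and a short estimate on the two midpoints bounds $x_j(t+1)-x_i(t+1)$ by $V$. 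Consequently the edge set only grows over time, so with connectivity at time $0$ the graph stays connected; on a line this forces every gap between consecutive robots to remain at most $V$.

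The heart of the proof, and the step I expect to be the main obstacle, is upgrading ``$D(t)$ converges'' to ``$D^*=0$'': the per-step contraction can become arbitrarily small, so monotonicity alone does not rule out a frozen or oscillating configuration of positive diameter. I would dispatch this in two regimes. If $D(t)\le V$ at some finite time, every pair is mutually visible, each robot computes $l=a(t)$ and $r=b(t)$, and all robots move to the common midpoint $(a(t)+b(t))/2$ in a single step, so $D^*=0$. Otherwise $D(t)>V$ for all $t$; since the edge set is monotone and finite it stabilizes at some time $T$, and by order preservation the indices realizing $l(x_i(t))$ and $r(x_i(t))$ are then constant for $t\ge T$. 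Hence for $t\ge T$ the dynamics is the \emph{linear} averaging map $x_i(t+1)=\tfrac12\bigl(x_{l(i)}(t)+x_{r(i)}(t)\bigr)$, i.e.\ $X(t+1)=MX(t)$ for a fixed row-stochastic matrix $M$. The two extreme robots carry self-loops, and every robot reaches the leftmost one by following leftward edges, so $M$ has a single aperiodic recurrent class reachable from every state; the standard averaging/Markov-chain theorem then yields $M^t\to\mathbf{1}\pi^{\!\top}$, so all robots converge to the single value $\pi^{\!\top}X(T)$ and $D^*=0$. This contradicts $D(t)>V$, so the second regime cannot occur, and in either case the robots converge to a point.

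Finally I would note that merges only decrease the number of robots finitely often and are harmless, and that the {\sc SSynch} case follows the same outline: the monotonicity and edge-persistence estimates hold verbatim when only an activated subset moves, and the final step becomes a product of time-varying stochastic matrices whose convergence to consensus is guaranteed by the maintained connectivity together with the fairness assumption that each robot is activated infinitely often.
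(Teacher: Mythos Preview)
The paper does not prove Theorem~\ref{thm:Ando}; it is quoted from~\cite{AnOaSuYa99} as background, so there is no in-paper argument to compare against. Judged on its own, your {\sc FSynch} argument is correct. The two structural lemmas you isolate---order preservation and edge persistence---are exactly the fault-free versions of what the paper later proves as Lemma~\ref{lem:no-crossing} and Lemma~\ref{lem:preserved-visibility}, so those steps are on firm ground. The substantive idea you add is that once the monotone, finite edge set freezes, the update becomes iteration of a fixed row-stochastic matrix $M$; your claim that $M$ has a unique aperiodic recurrent class reachable from every state holds (every state reaches the leftmost robot along $l$-edges, every closed class must therefore contain it, and the leftmost robot carries a self-loop), so $M^k\to\mathbf{1}\pi^{\top}$ and $D(t)\to 0$, contradicting $D(t)>V$. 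This in fact shows that in {\sc FSynch} the robots gather exactly in finitely many steps, not merely converge. The route is genuinely different from the original 2D proof in~\cite{AnOaSuYa99}, which tracks the smallest enclosing circle and does not exploit the eventual linearity that is special to one dimension.

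The one place that needs more is the {\sc SSynch} paragraph. Monotonicity, order preservation, and edge persistence do carry over when only an activated subset moves, but the last sentence---consensus for a product of time-varying stochastic matrices ``guaranteed by maintained connectivity together with fairness''---is doing real work that you have not spelled out. The per-step matrices can have zero diagonal on activated interior robots (weight $\tfrac12$ on each of $l(i)$ and $r(i)$, none on $i$), and the standard theorems you are implicitly invoking typically require positive diagonals or a connectivity-over-windows hypothesis. Fairness does ultimately supply what is needed, but you should either name the result you rely on (e.g., a Wolfowitz/SIA-type statement applied to products over full activation cycles, or a Moreau-type theorem) or give a direct argument that the diameter strictly decreases over one complete round of activations. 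As written, the {\sc SSynch} case is a plausible sketch rather than a proof.
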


\subsection {Spreading on a line}
In~\cite{CoPe08}, a classical spreading algorithm for robots in 1-dimensional systems has been analyzed both in {\sc FSynch} and {\sc SSynch}. A variant of this result, which is derived below, will be used in the   paper when proving convergence to a pattern.
 
Consider a set of oblivious robots $X = \{x_0, x_1, ..., x_n\}$ on a line  that follow the {\sc Look-Compute-Move} model, where $x_0$ and $ x_n$ do not move (equivalently, this can be considered as a segment delimited by the positions of $x_0$ and $x_n$).  
Let $|x_0(0),x_n(0)|= D$. The robots have {\em neighbouring visibility}, which means that they are able to see the two closest robots 
(while $x_0$ and $x_n$ know they are the delimiters of the segment).
The algorithm of \cite{CoPe08} ({\sc Spreading}) makes the robot converge to a configuration where the distance  between consecutive robots tends to $\frac{D}{n}$ by having the extremal robots never move and the others  move to the middle point between the two neighbouring robots. 

\begin{center}
\fbox{
\begin{minipage}{12.5cm}
{\sc Spreading} (for robot $x_i$ at time step $t$)
\begin{itemize}
\item If I am an extremal robot:  do nothing.
\item Let $x_i(t)^-$ and $x_i(t)^+$ be the closest visible robots.
  \item  move to  the midpoint between  $x_i(t)^-$ and $x_i(t)^+$.
\end{itemize}
\end{minipage}
}
\end{center}

\begin{theorem}\cite{CoPe08}
\label{thm:spreading}
Executing  Algorithm {\sc Spreading} in {\sc FSynch} or in  {\sc SSynch} on the  set of robots $R$ where the first and the last robots do not move, 
the robots converge to equidistant positions.
\end{theorem}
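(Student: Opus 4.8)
The plan is to recognize \textsc{Spreading} as a discrete averaging (Jacobi-type) process on a path with fixed endpoints, and to prove convergence through the spectral/ergodic properties of the associated stochastic matrix, handling the two schedulers separately.

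First I would establish that the left-to-right order of the robots is preserved, so that the two closest visible robots of an interior $x_i$ are always $x_{i-1}$ and $x_{i+1}$ and the rule is genuinely $x_i(t+1)=\tfrac12\bigl(x_{i-1}(t)+x_{i+1}(t)\bigr)$ for $1\le i\le n-1$, with $x_0,x_n$ fixed. Assuming $x_0(0)\le\cdots\le x_n(0)$, a one-line computation on the consecutive gaps $g_i(t)=x_{i+1}(t)-x_i(t)$ shows each updated gap to be an average of adjacent gaps (e.g.\ $g_i(t+1)=\tfrac12(g_{i-1}(t)+g_{i+1}(t))$ in the interior, with the analogous boundary cases), hence nonnegative; order is therefore invariant.

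For the \textsc{FSynch} case I would introduce the deviation $e_i(t)=x_i(t)-\bigl(x_0+i\,D/n\bigr)$ from the equidistant pattern. Since that pattern is the unique configuration fixed by the rule, the interior deviations obey the homogeneous recurrence $e(t+1)=M\,e(t)$ with $e_0=e_n=0$, where $M$ is the $(n-1)\times(n-1)$ symmetric tridiagonal matrix carrying $0$ on the diagonal and $\tfrac12$ on both off-diagonals. Its eigenvalues are the known path values $\cos(k\pi/n)$, $k=1,\dots,n-1$, all of modulus at most $\cos(\pi/n)<1$, so $M^{t}\to 0$ geometrically and $x_i(t)\to x_0+i\,D/n$. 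Equivalently, one may run the argument on the gap vector, whose update matrix is \emph{doubly stochastic} and primitive (the endpoint rows carry a positive diagonal entry), forcing $g_i(t)\to D/n$; both routes are essentially routine given the classical spectrum of the path.

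The \textsc{SSynch} case is the main obstacle, because the update is no longer governed by a single matrix: at each step only an activated subset moves, so the system evolves as $x(t+1)=A(t)\,x(t)$ for a time-varying family of row-stochastic matrices $A(t)$ (identity rows for the endpoints and for inactive robots, a midpoint-averaging row for each active interior robot). Here I would use a maximum-principle together with an ergodicity argument: because an activated robot is replaced by an average of two others and the fixed endpoints have deviation $0$, the quantity $\max_i e_i(t)$ is non-increasing and $\min_i e_i(t)$ is non-decreasing. Convergence of the spread to $0$ then follows from the hypothesis that every robot is activated infinitely often: over any window in which all interior robots act at least once, the composite matrix becomes scrambling (the influence of the two boundary values, both $0$, propagates inward along the path, so every pair of rows shares a positive entry), and classical results on backward products of stochastic matrices (weak ergodicity, Hajnal/Wolfowitz) make the deviations contract to $0$. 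The limit must be the equidistant pattern, being the only configuration fixed by the rule and compatible with the prescribed endpoints. I expect the delicate point to be verifying the scrambling/contraction property uniformly across admissible activation schedules, which is what makes the asynchronous analysis harder than the clean eigenvalue computation of the \textsc{FSynch} case.
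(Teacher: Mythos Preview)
The paper does not actually prove this theorem; it is quoted from Cohen and Peleg~\cite{CoPe08}. What the paper \emph{does} prove is the \textsc{FSynch} generalization (Theorem~\ref{thm:generalizedCohen} and the lemma following it), and there it explicitly follows the Cohen--Peleg method: expand the deviations $\eta_i(t)=x_i(t)-i/n$ in the discrete Fourier sine basis $g(i,k)=\sqrt{2/m}\,\sin(ki\pi/m)$, observe that the $k$-th coefficient contracts by the factor $\cos(k\pi/m)$, and conclude via the potential $\psi(t)=\sum_i\eta_i^2(t)=\sum_k\mu_k^2(t)$.

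Your \textsc{FSynch} argument is exactly this proof in matrix clothing: the tridiagonal matrix $M$ you write down is diagonalized by the same sine vectors, and your eigenvalues $\cos(k\pi/n)$ are precisely the contraction factors that appear in the paper's computation. So for \textsc{FSynch} the two arguments coincide. Your alternative via the gap vector and a doubly-stochastic primitive matrix is a pleasant variant that avoids computing the spectrum explicitly, at the cost of losing the explicit rate; either route is fine.

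For \textsc{SSynch} you take a genuinely different path (time-varying row-stochastic matrices, maximum principle, weak ergodicity of backward products). This is sound in spirit and is the natural replacement once a single-matrix spectral argument is unavailable. One caution: your claim that a single window in which every interior robot activates already yields a \emph{scrambling} product is too optimistic---for instance with $n=4$ and activation order $\{1\},\{3\},\{2\}$ the product still has row~$0$ supported only on column~$0$ while row~$3$ is supported on columns~$2,4$. You do eventually get scrambling (or, equivalently, a uniform Dobrushin contraction) after a bounded number of such windows, since positive mass from each endpoint propagates one step inward per activation of the adjacent robot; that is the quantity you must control, and it is exactly the ``delicate point'' you flagged. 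With that correction the argument goes through.
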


We now prove that, in {\sc FSynch}, convergence is achieved using the same algorithm  also in a slightly different setting. In fact, we consider the case when $x_0$ and $x_{n}$ are not still, but they are each converging towards a point. The generalization to this case is not straightforward. 

\begin{theorem}
\label{thm:generalizedCohen}
Let $X = \{x_0, x_1, ..., x_n\}$ where  
$x_0(t) \rightarrow x_0'$ and  $x_n(t) \rightarrow x_n'$ as $t \rightarrow  \infty$.
Executing  Algorithm {\sc Spreading} in {\sc FSynch} on the set of robots $\{x_1, ..., x_{n-1}\}$, 
the robots converge to equidistant positions between $x_0'$ and $x_n'$ .
\end{theorem}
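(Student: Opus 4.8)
The plan is to exploit the fact that, in {\sc FSynch}, every interior robot updates simultaneously, so one step of Algorithm {\sc Spreading} is \emph{exactly} a linear map. First I would verify that the left-to-right order of the robots is preserved, since the midpoint rule is order-preserving: from $x_{i-1}(t)\le x_i(t)$ and $x_{i+1}(t)\le x_{i+2}(t)$ one gets $x_{i-1}(t)+x_{i+1}(t)\le x_i(t)+x_{i+2}(t)$, i.e.\ $x_i(t+1)\le x_{i+1}(t+1)$. Consequently the two closest visible robots of $x_i$ are its index-neighbours (and even if positions coincide, the recurrence stays intact), so for $1\le i\le n-1$ we have $x_i(t+1)=\tfrac12\bigl(x_{i-1}(t)+x_{i+1}(t)\bigr)$. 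Collecting the interior positions into $y(t)=(x_1(t),\dots,x_{n-1}(t))^{\mathsf T}$, this reads $y(t+1)=Ay(t)+b(t)$, where $A$ is the symmetric $(n-1)\times(n-1)$ tridiagonal matrix with $0$ on the diagonal and $1/2$ on each off-diagonal, and $b(t)=\bigl(x_0(t)/2,\,0,\dots,0,\,x_n(t)/2\bigr)^{\mathsf T}$ carries the two moving endpoints.

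Second, I would identify the target. Since $A$ is symmetric with eigenvalues $\cos(k\pi/n)$ for $k=1,\dots,n-1$, its spectral radius is $\kappa:=\cos(\pi/n)<1$; hence $I-A$ is invertible and the limiting system $y^\star=Ay^\star+b^\star$, with $b^\star=\bigl(x_0'/2,\,0,\dots,0,\,x_n'/2\bigr)^{\mathsf T}$, has a unique solution $y^\star$. A direct substitution checks that the equidistant points $y_i^\star=x_0'+\tfrac{i}{n}(x_n'-x_0')$ satisfy the fixed-point relation, so $y^\star$ is precisely the equidistant distribution between $x_0'$ and $x_n'$ claimed in the statement.

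Third, I would reduce the convergence to a decay estimate. Writing $e(t)=y(t)-y^\star$ and subtracting the two fixed-point relations gives
$$e(t+1)=Ae(t)+\delta(t),\qquad \delta(t):=b(t)-b^\star,$$
where $\delta(t)\to 0$ by hypothesis, since $x_0(t)\to x_0'$ and $x_n(t)\to x_n'$. Unrolling the recurrence yields
$$e(t)=A^t e(0)+\sum_{s=0}^{t-1}A^{t-1-s}\delta(s).$$
Because $A$ is symmetric, $\|A\|_2=\kappa<1$, so the homogeneous term $A^t e(0)$ vanishes geometrically. For the convolution term I would split the sum at an index $T_0$ beyond which $\|\delta(s)\|\le\varepsilon$: the finitely many early terms are each damped by a power of $\kappa$ and tend to $0$, while the tail is bounded by $\varepsilon\sum_{j\ge 0}\kappa^{\,j}=\varepsilon/(1-\kappa)$. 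Letting $t\to\infty$ and then $\varepsilon\to 0$ forces the convolution term to $0$, hence $e(t)\to 0$ and $y(t)\to y^\star$.

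The main obstacle is precisely the novelty flagged in the statement: the boundary data are no longer constant, so Theorem~\ref{thm:spreading} does not apply as a black box. The crux is the last step — showing that the contraction $A$ driven by a vanishing-but-nonzero input $\delta(t)$ still produces a vanishing output; the standard Banach fixed-point argument for the still-endpoint case only covers $\delta\equiv 0$. Two points deserve care along the way: justifying order preservation (so that the ``two closest robots'' are genuinely the index-neighbours), and the essential use of {\sc FSynch}, which is what makes one step an \emph{exact} linear map governed by the single matrix $A$ — under {\sc SSynch} the one-step map is no longer $A$, which is exactly why the generalization is claimed only in the fully synchronous model.
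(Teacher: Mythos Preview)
Your argument is correct and is, at its core, the same spectral argument the paper uses: the paper's Fourier sine functions $g(i,k)=\sqrt{2/m}\sin(ki\pi/m)$ are precisely the eigenvectors of your tridiagonal matrix $A$, and its contraction factor $\Upsilon=\cos^2(\pi/m)$ is your $\kappa^2$. The only difference is packaging: you work with the operator norm and the Duhamel/variation-of-constants formula $e(t)=A^te(0)+\sum A^{t-1-s}\delta(s)$, whereas the paper expands in the Fourier basis and tracks the Lyapunov function $\psi(t)=\sum\eta_i^2(t)$, deriving the scalar recursion $\psi(t+1)\le \Phi\varepsilon+\Upsilon\psi(t)$ before iterating---your presentation is more compact and avoids the explicit coefficient bookkeeping, but both hinge on the same spectral gap $\cos(\pi/n)<1$ and the same ``contraction driven by a vanishing forcing'' estimate.
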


\begin{proof}
Without loss of generality,
suppose that $x_0' = 0$ and $x_n' = 1$.
We want to prove that $x_i(t) \rightarrow \frac{i}{n}$ as $t\rightarrow \infty$ for all $1\leq i \leq n-1$.
We follow the proof of Theorem~\ref{thm:spreading}.
For all $1\leq i \leq n-1$,
the next position of $x_i(t)$ is
$$x_i(t+1) = \frac{x_{i-1}(t) + x_{i+1}(t)}{2} .$$
Let $\eta_i(t) = x_i(t) - \frac{i}{n}$
for all $0\leq i \leq n$.
We get
$$\eta_i(t+1) = \frac{\eta_{i-1}(t) + \eta_{i+1}(t)}{2}$$
for all $1 \leq i \leq n-1$.
Our goal is to show that $\eta_i(t) \rightarrow 0$ for all $0 \leq i \leq n$. By the hypothesis, we already know that
$\eta_0(t) \rightarrow 0$ and $\eta_n(t) \rightarrow 0$ as $t\rightarrow \infty$.
The fact that $\eta_i(t) \rightarrow 0$ as $t\rightarrow \infty$ for all $1\leq i \leq n-1$ relies on the following lemma.
\end{proof}

The following lemma is a generalization of a result by Cohen and Peleg~\cite{CoPe08}.
\begin{lemma}
Let $\eta_i(t)$ be a sequence of real numbers for all $0 \leq i \leq m$, where $m\geq 2$ is an integer. Suppose that
$$\eta_i(t+1) \leq \frac{\eta_{i-1}(t) + \eta_{i+1}(t)}{2}$$
for all $1 \leq i \leq m-1$ and for all $t \geq 0$,
and $\eta_0(t) \rightarrow 0$ and $\eta_m(t) \rightarrow 0$
as $t\rightarrow \infty$.
Moreover,
suppose that there exists a positive real number $M$
such that $|\eta_i(t)| \leq M$
for all $0 \leq i \leq m$ and for all $t \geq 0$.
Then,
for all $0 \leq i \leq m$,
$\eta_i(t) \rightarrow 0$ as $t\rightarrow \infty$.
\end{lemma}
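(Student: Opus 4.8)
The plan is to prove the sharp \emph{one-sided} statement $\limsup_{t\to\infty}\eta_i(t)\le 0$ for every $i$ by a discrete maximum-principle argument with finite backward propagation; this is the real content, and the displayed conclusion $\eta_i(t)\to 0$ then follows by applying the same statement to the sequence $(-\eta_i(t))_t$, which satisfies the identical hypotheses whenever the recurrence holds with equality (as it does in the application to Theorem~\ref{thm:generalizedCohen}), yielding $\liminf_t\eta_i(t)\ge 0$ as well. First I would normalize: fix $\epsilon>0$; since $\eta_0(t),\eta_m(t)\to 0$ there is a $T_0$ with $\eta_0(t),\eta_m(t)\le\epsilon$ for $t\ge T_0$, and replacing $\eta_i(t)$ by $\eta_i(t+T_0)$ (all hypotheses are shift-invariant) I may assume these bounds hold for all $t\ge 0$, with $|\eta_i(t)|\le M$ still in force.

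Next I would set $P(t)=\max_{0\le i\le m}\eta_i(t)$. For interior $i$ the averaging inequality gives $\eta_i(t+1)\le\max(\eta_{i-1}(t),\eta_{i+1}(t))\le P(t)$, while the two boundary entries are $\le\epsilon$; hence $P(t+1)\le\max(\epsilon,P(t))$, so $Q(t):=\max(\epsilon,P(t))$ is non-increasing and converges to some $L\ge\epsilon$. If $L=\epsilon$, then sending $\epsilon\downarrow 0$ finishes the one-sided bound, so the crux is to exclude $L>\epsilon$. Assuming $L>\epsilon$, monotonicity forces $P(t)\ge L>\epsilon$ for all $t$, so at every time the maximum is $\ge L$ and is attained at an \emph{interior} index (the boundary being $\le\epsilon$).

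The key step is to control the ``shortfall'' below $L$ as one steps backward in time. Choose $\gamma>0$ with $\gamma(2^{m}-1)<L-\epsilon$, and $t^\ast$ with $P(s)\le L+\gamma$ for all $s\ge t^\ast$. If an interior index $i$ satisfies $\eta_i(s+1)\ge L-\delta$, then from $\eta_i(s+1)\le\tfrac12(\eta_{i-1}(s)+\eta_{i+1}(s))$ and $\eta_{i\pm1}(s)\le P(s)\le L+\gamma$ I get $\eta_{i\pm1}(s)\ge L-(2\delta+\gamma)$; thus the shortfall obeys $\delta_{k+1}=2\delta_k+\gamma$, i.e. $\delta_k=\gamma(2^k-1)$, as one steps backward in time and outward in index. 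Starting from an interior $i_0$ with $\eta_{i_0}(t^\ast+m)\ge L$ and iterating backward along $i_0\to i_0-1\to\cdots\to 1\to 0$ (every pivot being interior since $i_0\le m-1$), after $i_0$ steps I reach $\eta_0(t^\ast+m-i_0)\ge L-\delta_{i_0}\ge L-\gamma(2^{m-1}-1)>\epsilon$, contradicting $\eta_0\le\epsilon$.

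The main obstacle is precisely this tolerance bookkeeping: each backward step \emph{doubles} the shortfall, so the argument works only if $\gamma$ is fixed small relative to $L-\epsilon$ before one decides how far to propagate, and one must exploit that the domain width $m$ caps the number of backward steps needed to touch a boundary (this is the finite-propagation-speed feature of the averaging operator). Once $L>\epsilon$ is excluded we obtain $\limsup_t\max_i\eta_i(t)\le\epsilon$ for every $\epsilon>0$, hence $\limsup_t\eta_i(t)\le 0$; combining this with the symmetric bound obtained from $(-\eta_i)$ gives $\eta_i(t)\to 0$.
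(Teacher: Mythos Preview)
Your argument is correct and takes a genuinely different route from the paper's. The paper expands $\eta_i(t)$ in the discrete Fourier sine basis $g(i,k)=\sqrt{2/m}\,\sin(ki\pi/m)$, derives the (approximate) contraction $\mu_k(t+1)\approx\cos(k\pi/m)\,\mu_k(t)$ up to boundary errors, and concludes that $\psi(t)=\sum_k\mu_k^2(t)$ decays geometrically with ratio $\cos^2(\pi/m)<1$. Your proof is instead a discrete maximum principle: you show $Q(t)=\max(\epsilon,\max_i\eta_i(t))$ is non-increasing, and rule out a limit $L>\epsilon$ by backward propagation of a near-maximum to the boundary in at most $m-1$ steps, keeping track of the doubling shortfall $\delta_{k+1}=2\delta_k+\gamma$.

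Two remarks are worth making. First, you are right to flag that the stated hypothesis is only the one-sided inequality $\eta_i(t+1)\le\tfrac12(\eta_{i-1}(t)+\eta_{i+1}(t))$, from which only $\limsup_t\eta_i(t)\le 0$ can be extracted; indeed the two-sided conclusion is false under that hypothesis alone (take $\eta_0\equiv\eta_m\equiv 0$ and $\eta_i\equiv -M$ for interior $i$). The paper's Fourier argument silently needs equality as well: the step replacing $\eta_i(t+1)$ by $\tfrac12(\eta_{i-1}(t)+\eta_{i+1}(t))$ inside $\sum_i g(i,k)\,\eta_i(t+1)$ only preserves the inequality when $g(i,k)\ge 0$, which fails for $k\ge 2$, and the subsequent squaring of $\mu_k(t+1)$ is likewise unjustified from a one-sided bound. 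Both proofs therefore really establish the equality case used in Theorem~\ref{thm:generalizedCohen}; your write-up is simply explicit about it. Second, in trade-offs: the Fourier route yields an explicit exponential rate $\Upsilon=\cos^2(\pi/m)$, while your argument is entirely elementary (no spectral machinery), is the natural proof of the one-sided statement, and makes transparent why the finite width $m$ is what caps the backward propagation.
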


\proof
We show that for all $0 \leq i \leq m$,
$\eta_i(t) \rightarrow 0$ as $t\rightarrow \infty$.
By definition, this is true for $i \in \{0,m\}$.
To deal with other values of $i$,
let
$$\psi(t) = \sum_{i=0}^m\eta^2_i(t) .$$
We show that $\psi(t)\rightarrow 0$ as $t\rightarrow \infty$, which completes the proof.
Following the same approach as the one used in the proof of Theorem~\ref{thm:spreading}, we use the Fourier sine series of $\eta_i(t)$.
However, in our case,
we need to be careful since $\eta_0(t)$ and $\eta_m(t)$ are not necessarily equal to $0$.
Let
\begin{align*}
g(i,k) &= \sqrt{\frac{2}{m}}\sin\left(\frac{ki\pi}{m}\right) ,\\
\mu_k &= \sum_{i=0}^m \eta_i g(i,k) .
\end{align*}
We have
$$\eta_i = \sum_{k=0}^m\mu_k g(i,k)$$
for all $1 \leq i \leq m-1$.
Moreover,
we have
$$g(0,k) = g(m,k) = 0 $$
for all $0 \leq k \leq m$,
from which
$$\sum_{k=0}^m\mu_k g(0,k) = \sum_{k=0}^m\mu_k g(m,k) = 0.$$
Notice that
$$\sum_{i=0}^m g(i,k)g(i,q) = \delta_{k,q},$$
where $\delta_{k,q}$ stands for the Kronecker's delta, i.e., $\delta_{k,q} = 1$ if $k = q$ and $0$ otherwise.
Moreover, observe that for all $0 \leq k \leq m$,
$$|\mu_k| = \left|\sum_{i=0}^m \eta_i g(i,k)\right| = \left|\sum_{i=1}^{m-1} \eta_i g(i,k)\right| \leq \sum_{i=1}^{m-1} \left|\eta_i g(i,k)\right| \leq (m-1) \sqrt{\frac{2}{m}} M \leq \sqrt{2m}M.$$

Since $\eta_0(t) \rightarrow 0$ and $\eta_m(t) \rightarrow 0$ as $t\rightarrow \infty$,
for all $\epsilon > 0$,
there is a time $t_0 \geq 0$ such that
$|\eta_0(t)| < \frac{\sqrt{2}}{2}\epsilon$ and $|\eta_m(t)| < \frac{\sqrt{2}}{2}\epsilon$
for all $t\geq t_0$.
Therefore,
for all $t\geq t_0$,
we have
\begin{align*}
\psi(t) &= \sum_{i=0}^m\eta^2_i(t) \\
&\leq \epsilon^2 + \sum_{i=1}^{m-1}\eta^2_i(t) \\
&= \epsilon^2 + \sum_{i=1}^{m-1}\left(\sqrt{\frac{2}{m}}\sum_{k=0}^m\mu_k(t)\sin\left(\frac{ki\pi}{m}\right) \right)^2 \\
&= \epsilon^2 + \sum_{i=0}^m\left(\sqrt{\frac{2}{m}}\sum_{k=0}^{m}\mu_k(t)\sin\left(\frac{ki\pi}{m}\right) \right)^2 \\
&= \epsilon^2 + \sum_{i=0}^m\left(\sum_{k=0}^m\sqrt{\frac{2}{m}}\mu_k(t)\sin\left(\frac{ki\pi}{m}\right) \right)^2 \\
&= \epsilon^2 + \sum_{i=0}^m\sum_{k=0}^m\sum_{q=0}^m
\left(\sqrt{\frac{2}{m}}\mu_k(t)\sin\left(\frac{ki\pi}{m}\right)\right)
\left(\sqrt{\frac{2}{m}}\mu_q(t)\sin\left(\frac{qi\pi}{m}\right)\right) \\
&= \epsilon^2 + \sum_{k=0}^{m}\sum_{q=0}^{m}\mu_k(t)\mu_q(t)\sum_{i=0}^m \left(\sqrt{\frac{2}{m}}\sin\left(\frac{ki\pi}{m}\right)\right)
\left(\sqrt{\frac{2}{m}}\sin\left(\frac{qi\pi}{m}\right)\right)\\
&= \epsilon^2 + \sum_{k=0}^m\sum_{q=0}^m\mu_k(t)\mu_q(t)\sum_{i=0}^m g(i,k)g(i,q)\\
&= \epsilon^2 + \sum_{k=0}^m\sum_{q=0}^m\mu_k(t)\mu_q(t)\delta_{k,q}\\
&= \epsilon^2 + \sum_{k=0}^m \mu_k^2(t) .
\end{align*}
Moreover,
for $1 \leq k  \leq m-1$
(since by definition, $\mu_0(t+1) = \mu_m(t+1) = 0$),
we have
\begin{align*}
&\quad \mu_k(t+1) \\
=&\quad \sum_{i = 0}^m \sqrt{\frac{2}{m}}\sin\left(\frac{ki\pi}{m}\right)\eta_i(t+1) \\
=&\quad \sum_{i = 1}^{m-1} \sqrt{\frac{2}{m}}\sin\left(\frac{ki\pi}{m}\right)\eta_i(t+1) \\
\leq&\quad \sum_{i = 1}^{m-1} \sqrt{\frac{2}{m}}\sin\left(\frac{ki\pi}{m}\right)\frac{\eta_{i-1}(t)+\eta_{i+1}(t)}{2} \\
=&\quad \sum_{i = 0}^{m-2} \frac{1}{2}\sqrt{\frac{2}{m}}\sin\left(\frac{k(i+1)\pi}{m}\right) \eta_i(t) 
+ \sum_{i = 2}^{m} \frac{1}{2}\sqrt{\frac{2}{m}}\sin\left(\frac{k(i-1)\pi}{m}\right) \eta_i(t)\\
=&\quad \sum_{i = 0}^{m-1} \frac{1}{2}\sqrt{\frac{2}{m}}\sin\left(\frac{k(i+1)\pi}{m}\right) \eta_i(t) 
+ \sum_{i = 1}^{m} \frac{1}{2}\sqrt{\frac{2}{m}}\sin\left(\frac{k(i-1)\pi}{m}\right) \eta_i(t)\\
\leq&\quad \epsilon + \sum_{i = 0}^m \frac{1}{2}\sqrt{\frac{2}{m}}\sin\left(\frac{k(i+1)\pi}{m}\right) \eta_i(t) 
+ \sum_{i = 0}^m \frac{1}{2}\sqrt{\frac{2}{m}}\sin\left(\frac{k(i-1)\pi}{m}\right) \eta_i(t)\\
=&\quad \epsilon + \frac{1}{2}\sqrt{\frac{2}{m}}\sum_{i = 0}^m \left(\sin\left(\frac{k(i+1)\pi}{m}\right)  + \sin\left(\frac{k(i-1)\pi}{m}\right)\right)\eta_i(t)\\
=&\quad \epsilon + \sqrt{\frac{2}{m}}\sum_{i = 0}^{m} \sin\left(\frac{ki\pi}{m}\right)\cos\left(\frac{k\pi}{m}\right)\eta_i(t)\\
=&\quad \epsilon + \cos\left(\frac{k\pi}{m}\right)\mu_k(t)
\end{align*}

Therefore,
\begin{align*}
\psi(t+1) &\leq \epsilon^2 + \sum_{k=0}^m\mu^2_k(t+1) \\
&= \epsilon^2 + \sum_{k=1}^{m-1}\mu^2_k(t+1) \\
&\leq \epsilon^2 + \sum_{k=1}^{m-1}\left(\epsilon + \cos\left(\frac{k\pi}{m}\right)\mu_k(t)\right)^2 \\
&= \epsilon^2 + \sum_{k=1}^{m-1}\left(\epsilon^2 + 2\epsilon\cos\left(\frac{k\pi}{m}\right)\mu_k(t) + \cos^2\left(\frac{k\pi}{m}\right)\mu_k^2(t)\right) \\
&\leq m\epsilon^2 + 2\epsilon (m-1)\cos\left(\frac{\pi}{m}\right)\sqrt{2m}M + \cos^2\left(\frac{\pi}{m}\right)\psi(t) \\
&= \left(m\epsilon + 2(m-1)\cos\left(\frac{\pi}{m}\right)\sqrt{2m}M\right)\epsilon + \cos^2\left(\frac{\pi}{m}\right)\psi(t) \\
&= \Phi\epsilon + \Upsilon\psi(t),
\end{align*}
where
$$\Phi = m\epsilon + 2(m-1)\cos\left(\frac{\pi}{m}\right)\sqrt{2m}M$$
is bounded above by a constant and
$$\Upsilon = \cos^2\left(\frac{\pi}{m}\right) < 1 .$$

Consequently, we have
\begin{align*}
\psi(t_0+t') &\leq \Phi\epsilon\left(1+\Upsilon+\Upsilon^2+...+\Upsilon^{t'-1}\right)+\Upsilon^{t'}\psi(t_0) \\
&\leq \Phi\epsilon\frac{1}{1-\Upsilon}+\Upsilon^{t'}\psi(t_0) 
\end{align*}
and for $t'$ sufficiently large,
we get
$$\Phi\epsilon\frac{1}{1-\Upsilon}+\Upsilon^{t'}\psi(t_0) \leq \Phi\epsilon\frac{1}{1-\Upsilon}+\epsilon = \left(\Phi\frac{1}{1-\Upsilon}+1\right)\epsilon ,$$
which can be made arbitrarily small by an appropriate choice of $\epsilon$.
\qed

\section{Robots' Dynamics in Presence of Two Faults}
%!TEX root = Fault-InducedDynamics.tex 

It is easy to see that if the configuration contains a single faulty robots, the other  robots converge toward it.
In this Section we then  focus on the case when the system contains two faults and we show that,  starting from an arbitrary  configuration  
and following algorithm  {\sc Convergence1D}, the system converges towards a limit configuration. 
% ${\cal C}$, i.e., we will show that $ \lim_{t \rightarrow \infty} X(t) = {\cal C}$.
%{\bf Comments: we should leave only one definition of convergence.}
%{\em In other words, if $F$ denote the synchronous application 
%of Algorithm  {\sc Convergence1D},
%there exists a time $t$ for which $X(t+1) = F(X(t))$.
%We will prove that, for any $\epsilon > 0$,
%there exists a time $t$ such that, for all $t'>t$,
%$$\max_{0\leq i \leq n}\{ |x_i(t')-x_i(t'+1)|\} < \epsilon .$$
%In such a case, we say that $X(t'+1)$ and $X(t')$ are $\epsilon$-close. Analogously, when two robots are at distance smaller than or equal to $\epsilon$, we will say that they are $\epsilon$-close.
% 
%In other words, let $F^t(X_0)$ denote the execution of $t$ steps
%of Algorithm  {\sc Convergence1D} on an arbitrary initial configuration $X_0$. We will show that $lim_{t \rightarrow \infty} F^t(X_0) = {\bf X}$, where {\bf X} is a fixed point of the algorithm (i.e., $F({\bf X}) = {\bf X}$).
%}
Intuitively, we will show the convergence of $X$ by showing that the robots will eventually form
a ``hierarchical" structure of ``independent"  groups, 
where the robots at a certain level of the hierarchy 
move $\epsilon$-close to their convergence point  
after the lower levels robots have already done so.

\subsection{Basic Properties} 
We start with a series of lemmas leading to the proof of two crucial properties:  there exists a time after which robots preserve their farthest neighbours (Theorem \ref{thm:preserved-farthest}) and  there exists a time after which the number of different positions  occupied by them becomes constant (Corollary \ref{cor:stable}). 
%We first show that two non-faulty robots never cross each others (refer to Lemma~\ref{lem:no-crossing}) then we show that the robots never lose visibility of any visible robot
%(refer to Lemma~\ref{lem:preserved-visibility}).

\begin{lemma}[No Crossing]
\label{lem:no-crossing}
If $x$ and $z$ are two non-faulty robots and $x(t) < z(t)$, then $x(t+1) \leq z(t+1)$.
\end{lemma}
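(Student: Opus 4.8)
The plan is to reduce the statement to the update rule of {\sc Convergence1D} together with two monotonicity facts about the visible neighbourhoods. Since $x$ and $z$ are both non-faulty, they both execute the move, so
$$x(t+1) = \frac{l(x(t)) + r(x(t))}{2}, \qquad z(t+1) = \frac{l(z(t)) + r(z(t))}{2}.$$
Hence it suffices to establish the two inequalities $l(x(t)) \le l(z(t))$ and $r(x(t)) \le r(z(t))$; averaging them immediately gives $x(t+1) \le z(t+1)$. The whole argument will use only the positions at time $t$, so whether the extremal visible robots $l(\cdot)$ and $r(\cdot)$ are faulty or not is irrelevant; the only place faultlessness of $x$ and $z$ enters is in guaranteeing that they actually move to their midpoints.

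First I would record that every robot sees itself (since $|x(t)-x(t)|=0\le V$), so $l(x(t)) \le x(t) \le r(x(t))$, and likewise for $z$. To prove $r(x(t)) \le r(z(t))$ I would split on whether $r(x(t)) \ge z(t)$ or $r(x(t)) < z(t)$. In the first case the robot realizing $r(x(t))$ lies to the right of $z$ at distance $r(x(t)) - z(t) < r(x(t)) - x(t) \le V$, so it is visible from $z$ and thus $r(z(t)) \ge r(x(t))$. In the second case $z$ sees itself at $z(t) > r(x(t))$, so trivially $r(z(t)) \ge z(t) > r(x(t))$. The inequality $l(x(t)) \le l(z(t))$ follows by the mirror-image argument, splitting on whether $l(z(t)) \le x(t)$ (the robot realizing $l(z(t))$ is then within distance $x(t) - l(z(t)) < z(t) - l(z(t)) \le V$ of $x$, hence visible from $x$) or $l(z(t)) > x(t)$ (in which case $x$ sees itself at $x(t) < l(z(t))$).

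The only delicate point, and the reason the lemma is not completely immediate, is that the two visibility windows $[x(t)-V,\,x(t)+V]$ and $[z(t)-V,\,z(t)+V]$ are merely shifted copies of one another: the rightmost robot seen by $x$ need not be seen by $z$, and the leftmost robot seen by $z$ need not be seen by $x$. The case analysis above is exactly what handles these boundary situations, and the fact that every robot always sees itself is what closes the cases where the extremal visible robot is ``lost'' across the shift. I expect this small case distinction to be the whole substance of the proof, the averaging step being purely mechanical.
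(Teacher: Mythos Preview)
Your proof is correct and follows exactly the same approach as the paper: establish $l(x(t))\le l(z(t))$ and $r(x(t))\le r(z(t))$, then average. The paper's proof simply asserts these two monotonicity inequalities ``by definition'' in one line, whereas you actually supply the case analysis that justifies them; your argument is thus a fleshed-out version of the paper's, not a different route.
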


\begin{proof}
Since $x(t) < z(t)$,
we have that $r(x(t)) \leq r(z(t))$ and $l(x(t)) \leq l(z(t))$
by definition.
It follows that $x(t+1) = \frac{l(x(t))+ r(x(t))}{2} \leq  \frac{l(z(t))+ r(z(t))}{2} = z(t+1)$.
\end{proof}

With the next three lemmas
(\ref{lem:segment}, \ref{lem:no-more-crossing}, and~\ref{lem:at most two outsiders}),
we show that all robots, 
except possibly two,
eventually enter the segment  $[x_0,x_n]$ delimited by the two faulty robots. 
At most two robots might perpetually stay outside of it, one to the left of $x_0$ and one to the right of $x_n$. If this is the case, however, the two outsiders converge to $x_0$ and $x_n$, respectively.

%--------------------------------------------- 
\begin{lemma}
\label{lem:segment}
Either one of the following two scenarios happens as $t\rightarrow \infty$.
\begin{enumerate}
\item In a finite number of steps,
all robots place themselves
inside the line segment $[x_0,x_n]$.

\item There is at least one robot $x$ that never enters the line segment $[x_0,x_n]$. If $x(0) < x_0$,
then $x(t)$ tends towards $x_0$ as $t\rightarrow \infty$.
If $x(0) > x_n$,
then $x(t)$ tends towards $x_n$ as $t\rightarrow \infty$.
\end{enumerate}
\end{lemma}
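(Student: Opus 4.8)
The plan is to track the two extreme robots and show that each either reaches the segment $[x_0,x_n]$ in finite time or converges monotonically to the nearer fault. I define $a(t)=\min_i x_i(t)$ and $b(t)=\max_i x_i(t)$; by symmetry I only discuss the left endpoint $a(t)$, the argument for $b(t)$ being identical with $x_n$ in place of $x_0=0$.

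First I would record two easy consequences of Lemma~\ref{lem:no-crossing}. \emph{(i) Absorption:} if $a(t_1)\ge 0$ for some $t_1$, then $a(t)\ge 0$ for all $t\ge t_1$; indeed, when every robot lies in $[0,\infty)$ each robot sees only robots at positions $\ge 0$, so its midpoint destination is $\ge 0$. \emph{(ii) Monotonicity:} while $a(t)<0$ the leftmost robot $x$ is non-faulty (the faults sit at $0$ and at $x_n>0$), it sees nothing to its left so $l(x(t))=a(t)$, whence $a(t+1)=\tfrac{a(t)+r(x(t))}{2}\ge a(t)$; moreover, by no-crossing no other robot overtakes it, so this is exactly the update of $a(t)$. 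Consequently $a(t)$ is non-decreasing and bounded above by $0$, so either it reaches $[0,\infty)$ at some finite step (left side absorbed) or $a(t)\nearrow a^{\ast}\le 0$. Doing the same on the right: if both sides are absorbed we are in Scenario~1 at the later of the two times (absorption being permanent); otherwise some extreme robot stays strictly outside forever and remains on its side by no-crossing, which is Scenario~2, and what remains is to identify its limit.

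The heart of the proof is to show $a^{\ast}=0$ whenever $a(t)<0$ for all $t$. From $a(t+1)=\tfrac{a(t)+r(x(t))}{2}$ and $a(t),a(t+1)\to a^{\ast}$ I get $r(x(t))=2a(t+1)-a(t)\to a^{\ast}$; writing $\epsilon_t=r(x(t))-a(t)=2\bigl(a(t+1)-a(t)\bigr)\to 0$, the leftmost robot eventually sees no robot beyond $a(t)+\epsilon_t$, so there is a robot-free gap $\bigl(a(t)+\epsilon_t,\,a(t)+V\bigr]$ of width $\to V$ immediately to its right. Now suppose $a^{\ast}<0$. If $a^{\ast}>-V$, then for large $t$ the fault $x_0=0$ lies within distance $V$ of $x$ and strictly to its right, so $x$ would see it and $r(x(t))\ge 0$, contradicting $r(x(t))\to a^{\ast}<0$. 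If $a^{\ast}\le -V$, then $x_0$ sits beyond the gap ($0>a(t)+V$ for large $t$), and I would argue that the shrinking window $[a(t),a(t)+\epsilon_t]$ forms a connected component on its own, contradicting connectivity of the visibility graph (recall the initial graph is connected and the midpoint rule preserves pairwise visibility, since a mutually visible pair each moves by at most $V/2$ and thus stays within~$V$). Either way $a^{\ast}=0$, and then any robot $x$ with $x(t)<0$ for all $t$ satisfies $a(t)\le x(t)<0$ with $a(t)\to 0$, so $x(t)\to 0=x_0$ by squeezing; symmetrically a perpetual right-outsider converges to $x_n$.

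The step I expect to be the main obstacle is proving that the window cannot keep reaching across the gap. A priori a robot $w\in[a(t),a(t)+\epsilon_t]$ could still see a robot $p$ at the far edge of the gap, at position just above $a(t)+V$, so the component need not be isolated. I would rule this out by a one-step argument: if such a $w$ exists then its destination satisfies $w(t+1)\ge \tfrac{a(t)+(a(t)+V)}{2}=a(t)+V/2$, and a short computation shows $x$ still sees $w$ at time $t+1$, forcing $\epsilon_{t+1}\ge V/2-\epsilon_t/2$; since $\epsilon_t\to 0$ this contradicts $\epsilon_{t+1}\to 0$. Hence for large $t$ no window robot sees anything past the gap, the window is a genuine connected component, and the disconnection contradiction goes through. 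This propagation-type argument, rather than the routine monotonicity bookkeeping, is where the real work lies.
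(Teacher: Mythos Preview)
Your overall strategy is sound and shares with the paper the key mechanism: once the leftmost robot has essentially stopped (so $r(x(t))\to a^{\ast}$), some robot in its tiny neighbourhood must still see a robot at distance greater than $V$ to the right, and that single long-range sight forces a jump contradicting convergence to $a^{\ast}$. The paper carries this out without your case split: it takes $x_1:=r(x_\ell(t_0))$, observes that $r(x_1(t_0))$ must lie beyond $x_\ell(t_0)+V$ (this is where connectivity enters, implicitly), and then computes explicitly that $x_\ell(t_0+2)>x_\ell^{\ast}$. Your decomposition into $a^{\ast}>-V$ (where $x_0$ is itself visible, giving the contradiction instantly) versus $a^{\ast}\le -V$ (where you invoke connectivity of the visibility graph to produce a window robot that sees across the gap) is more structural and makes the dependence on connectivity explicit, at the price of a bit more machinery; the paper's single two-step computation is terser but hides that same dependence inside the line ``$r(x_1(t_0))-x_\ell(t_0)>V$''. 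The one-step propagation bound $\epsilon_{t+1}\ge V/2-\epsilon_t/2$ you derive is essentially the paper's overshoot inequality $x_\ell(t_0+2)>x_\ell^{\ast}$, rewritten.

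There is one genuine slip you should fix. Your inline justification that ``a mutually visible pair each moves by at most $V/2$ and thus stays within~$V$'' is wrong as stated: two robots at distance $V$ could in principle each move $V/2$ in opposite directions and end at distance $2V$. The correct argument (this is the paper's Preserved-Visibility lemma, whose proof is independent of the present one) is that if $x(t)<y(t)$ are mutually visible then $r(x(t))\ge y(t)$ and $l(y(t))\le x(t)$, which gives
\[
y(t+1)-x(t+1)\ \le\ \frac{x(t)+(y(t)+V)}{2}-\frac{(x(t)-V)+y(t)}{2}\ =\ V.
\]
Your disconnection branch in the case $a^{\ast}\le -V$ needs the visibility graph to stay connected, so it really does rely on this fact; replace the faulty one-liner by the two-line computation above and the argument goes through. (Your direct check that $x$ still sees $w$ at time $t+1$ in the other branch is fine and does not depend on the flawed sentence.)
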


\begin{proof}
Since the two faulty robots do not move, they are already inside
$[x_0,x_n]$. For the rest of the proof, we consider only the non-faulty robots. Let $x_{\ell}$ and $x_r$ be the leftmost and the rightmost non-faulty robots, respectively. 
By Lemma~\ref{lem:no-crossing},
$x_{\ell}$ (respectively $x_r$)
stays the leftmost
(respectively the rightmost)
non-faulty robot at all steps of the execution of the algorithm.
Therefore,
it is sufficient to prove the lemma for $x_{\ell}$ and $x_r$.

We first argue that if at some time $t_0 > 0$,
$x_{\ell}(t_0) \in [x_0,x_n]$, then for all $t > t_0$,
$x_{\ell}(t) \geq x_0$.
Since $x_{\ell}$ is the leftmost non-faulty robot
and $x_{\ell}(t_0) \in [x_0,x_n]$,
we have $l(x_{\ell}(t_0)) \geq x_0$.
Therefore,
$$x_{\ell}(t_0+1) = \frac{l(x_{\ell}(t_0))+r(x_{\ell}(t_0))}{2} \geq \frac{x_0+x_{\ell}(t_0)}{2} \geq \frac{x_0+x_0}{2} = x_0 ,$$
from which the proof follows by induction on $t$.
A symmetric argument shows that if $x_r(t_0) \in [x_0,x_n]$, then for all $t > t_0$,
$x_r(t) \leq x_n$.
It remains to consider the case
where $x_{\ell}$ or $x_r$ never enters $[x_0,x_n]$.

Suppose that $x_{\ell}$ does not enter the interval $[x_0,x_n]$ in a finite number of steps. Therefore\footnote{The case where the leftmost robot is to the right of $x_n$ is taken care of by the case where the rightmost robot is to the right of $x_n$.},
$x_{\ell}(t) < x_0$ for all $t \geq 0$.
Together with the fact that
$x_{\ell}$ is the leftmost non-faulty robot,
we get $l(x_{\ell}(t))=x_{\ell}(t)$ and $r(x_{\ell}(t)) > x_{\ell}(t)$ for all $t\geq 0$. Therefore,
$$x_{\ell}(t + 1) = \frac{l(x_{\ell}(t))+r(x_{\ell}(t))}{2}
> \frac{x_{\ell}(t)+x_{\ell}(t)}{2} = x_{\ell}(t) $$
for all $t \geq 0$.
Thus, it follows that $x_{\ell}(t)$ is strictly increasing for $t \geq 0$. Since $x_{\ell}$ never enters the interval $[x_0,x_n]$,
$x_{\ell}(t) < x_0$ for all $t \geq 0$. Therefore,
$x_{\ell}(t)$ converges to a point $x_{\ell}^* \leq x_0$ as $t\rightarrow \infty$.

We show that $x_{\ell}^* = x_0$ by contradiction.
Suppose that $x_{\ell}^* < x_0$.
Since $x_{\ell}(t)$ is strictly increasing for $t \geq 0$
and $x_{\ell}(t)$ converges to $x_{\ell}^*$ as $t\rightarrow \infty$,
then $x_{\ell}(t) < x_{\ell}^*$ for all $t \geq 0$.
Let $t_0 \geq 0$ be a time
for which $x_{\ell}^* - x_{\ell}(t_0) = \delta < \frac{V}{4}$.
Let $x_1(t_0) = r(x_{\ell}(t_0))$
and $\delta' = x_1(t_0) - x_{\ell}^*$.
We do not know whether $x_1(t_0)$ is to the left or to the right of $x_{\ell}^*$.
In other words, we do not know the sign of $\delta'$.
Since $x_{\ell}(t)$ is strictly increasing
and $x_{\ell}(t) < x_{\ell}^*$ for all $t \geq 0$,
we have $|\delta'| < \delta$.
Therefore,
$x_{\ell}^* - x_{\ell}(t_0+1) = \frac{\delta-\delta'}{2}$.
We now look at the rightmost visible robot from $x_1(t_0)$.
We have $r(x_1(t_0)) - x_{\ell}(t_0) > V$,
otherwise $x_1(t_0)$ would not be the rightmost visible robot from $x_{\ell}(t_0)$.
Therefore,
we have
\begin{align}
\label{ineq rx1 x1}
r(x_1(t_0)) - x_1(t_0) = (r(x_1(t_0)) - x_{\ell}(t_0)) + (x_{\ell}(t_0) - x_1(t_0)) > V - (\delta + \delta') .
\end{align}
We also have
\begin{align}
\label{ineq x1 xl t0+1}
x_1(t_0+1) - x_{\ell}(t_0+1)
&= \frac{l(x_1(t_0)) + r(x_1(t_0))}{2} - \frac{l(x_{\ell}(t_0)) + r(x_{\ell}(t_0))}{2} \\
&= \frac{x_{\ell}(t_0) + r(x_1(t_0))}{2} - \frac{x_{\ell}(t_0) + x_1(t_0)}{2} \\
\label{ineq x1 xl t0+1 inside}
&= \frac{r(x_1(t_0)) - x_1(t_0)}{2} \\
&< V ,
\end{align}
from which $x_1(t_0+1)$ is visible from $x_{\ell}(t_0+1)$.
This leads to
\begin{align*}
x_{\ell}(t_0+2) - x_{\ell}^*
&= \frac{l(x_{\ell}(t_0+1))+r(x_{\ell}(t_0+1))}{2}  - x_{\ell}^* \\
&\geq \frac{x_{\ell}(t_0+1)+x_1(t_0+1)}{2}  - x_{\ell}^* \\
&= \frac{(x_{\ell}(t_0+1) - x_{\ell}^*) + (x_1(t_0+1) - x_{\ell}^*)}{2} \\
&= \frac{(x_{\ell}(t_0+1) - x_{\ell}^*) + (x_1(t_0+1) - x_{\ell}(t_0+1)) + (x_{\ell}(t_0+1) - x_{\ell}^*)}{2} \\
&= \frac{2(x_{\ell}(t_0+1) - x_{\ell}^*) + (x_1(t_0+1) - (x_{\ell}(t_0+1))}{2} \\
&> \frac{(\delta'-\delta) + \frac{V-(\delta+\delta')}{2}}{2} & \text{from~\eqref{ineq x1 xl t0+1},
\eqref{ineq x1 xl t0+1 inside}
and~\eqref{ineq rx1 x1}}\\
&= \frac{V-3\delta+\delta'}{4} \\
&> \frac{V-4\delta}{4} \\
&> 0,
\end{align*}
from which $x_{\ell}(t_0+2) > x_{\ell}^*$,
which is a contradiction
since $x_{\ell}(t) < x_{\ell}^*$ for all $t \geq 0$.

Since $x_{\ell}(t)$ converges to $x_0$ as $t\rightarrow \infty$,
$x_{\ell}$ will be at distance at most $\epsilon$ from $x_0$ in a finite number of steps.

A symmetric argument completes the proof for $x_r$.
\end{proof}

%---------------------------------------------
\begin{lemma}[No More Crossing]
\label{lem:no-more-crossing}
If $x$ is a non-faulty robot,
it will cross at most a finite number of times with a faulty robot.
\end{lemma}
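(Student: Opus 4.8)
The plan is to reduce the claim to the behaviour of the two extreme non-faulty robots and to invoke Lemma~\ref{lem:segment}. Since there are exactly two faults, the faulty robots sit at the endpoints $x_0=0$ and $x_n=L$ of the segment, so a non-faulty robot crosses a faulty robot precisely when it passes through position $0$ or through position $L$; equivalently, when it leaves or enters $[x_0,x_n]$ through an endpoint. By Lemma~\ref{lem:no-crossing} the order of the non-faulty robots is preserved, so the leftmost non-faulty robot $x_\ell$ stays leftmost and the rightmost $x_r$ stays rightmost. I will bound crossings of $x_0$; crossings of $x_n$ are symmetric, and a robot's total number of crossings is the sum of the two.

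First I would dispose of the easy case using the dichotomy of Lemma~\ref{lem:segment}. If $x_\ell$ ever lies in $[x_0,x_n]$ at some finite time $t_0$, then by the first part of that lemma $x_\ell(t)\ge x_0=0$ for all $t\ge t_0$; since $x_\ell$ is leftmost, every non-faulty robot satisfies $x(t)\ge x_\ell(t)\ge 0$ for $t\ge t_0$, hence no robot is ever strictly to the left of $x_0$ again and every robot crosses $x_0$ only during the finite interval $[0,t_0]$. Otherwise we are in scenario~2: $x_\ell(t)<0$ for all $t$, with $x_\ell(t)\nearrow 0$. In this regime $x_\ell$ itself never crosses $x_0$, and it remains to control the crossings of the other robots.

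For scenario~2 the key quantitative tool is a geometric squeeze at the wall. Once $x_\ell(t)>-V$ (which holds for all large $t$ since $x_\ell\to 0$), every robot $x$ within distance $V$ of $0$ sees the faulty robot $x_0$, so $l(x(t))\le 0\le r(x(t))$, while $l(x(t))\ge x_\ell(t)$ because $x_\ell$ is leftmost. Hence $x(t+1)=\tfrac12\bigl(l(x(t))+r(x(t))\bigr)\ge \tfrac12 x_\ell(t)$, i.e. the leftward excursion satisfies $x^-(t+1)\le \tfrac12 x_\ell^-(t)$ for every such $x$; applying this to $x_\ell$ gives $x_\ell^-(t+1)\le\tfrac12 x_\ell^-(t)$, so $x_\ell^-(t)\to 0$ geometrically and every robot's penetration to the left of $x_0$ is at most $\tfrac12 x_\ell^-(t)$, which vanishes.

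Finally I would turn \emph{vanishing penetration} into \emph{finitely many crossings}. The mechanism is that a right-to-left crossing of $x$ at time $t$ (so $x(t)\ge 0$ and $x(t+1)<0$) forces $r(x(t))<-l(x(t))\le x_\ell^-(t)$, meaning that \emph{all} robots visible to $x$ are confined to the shrinking window $(-x_\ell^-(t),x_\ell^-(t))$ about the wall. I would argue this cannot happen for arbitrarily large $t$: as long as $x$ retains a visible neighbour at some distance bounded away from $0$ on its inner (right) side, one has $r(x(t))\ge c>0$, and then $x(t+1)=\tfrac12\bigl(x_\ell(t)+r(x(t))\bigr)\to \tfrac12 c>0$, so $x$ eventually stays to the right of $x_0$. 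The main obstacle is exactly establishing this ``no inner isolation'' property: one must rule out that the converging outsider's vanishing pull repeatedly drags a robot back across the faulty robot while its right-hand neighbours simultaneously collapse into the same shrinking window. I expect to close this by a visibility-maintenance argument showing that a non-outsider robot always keeps a neighbour at bounded distance toward the interior (ultimately the second faulty robot $x_n$ or a persistent robot between $0$ and $L$), so that $r(x(t))$ cannot tend to $0$; the symmetric statement at $x_n$ then completes the bound on the total number of crossings.
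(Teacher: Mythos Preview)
Your reduction to the behaviour of $x_\ell$ and the case split via Lemma~\ref{lem:segment} are correct, and in scenario~2 you correctly derive the key inequality: a right-to-left crossing of $x$ at time $t$ forces $r(x(t))<-l(x(t))\le x_\ell^-(t)$, so the whole visible neighbourhood of $x$ is trapped in the window $(-x_\ell^-(t),\,x_\ell^-(t))$. But from there you take a difficult detour: you try to rule this out by arguing that $r(x(t))$ stays bounded below by some $c>0$ via a ``no inner isolation'' property. That property is not easy to establish and your sketch is essentially circular --- preserved visibility tells you $x$ never loses neighbours, but nothing prevents all of $x$'s visible neighbours from themselves converging to $0$, so appealing to ``a persistent robot between $0$ and $L$'' presupposes exactly the kind of conclusion you are trying to prove.

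The paper closes the argument directly from the confinement you already derived, without any lower bound on $r(x(t))$. Once $x_\ell^-(t)<V/2$, the window $(-x_\ell^-(t),\,x_\ell^-(t))$ has diameter less than $V$, so $x_\ell$ sees $r(x(t))$ and $x$ sees $x_\ell$. Since $x_\ell$ is leftmost this gives $l(x(t))=x_\ell(t)=l(x_\ell(t))$, and since $x(t)\ge x_\ell(t)$ while $x_\ell$ sees $r(x(t))$ we get $r(x(t))=r(x_\ell(t))$. Hence $x(t+1)=x_\ell(t+1)$: the robot $x$ \emph{merges} with $x_\ell$. As $x_\ell$ is strictly increasing toward $0$ and never crosses it, $x$ cannot cross again either. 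So the right conclusion from your confinement observation is merging, not a lower bound on $r(x(t))$; the geometric-squeeze paragraph, while correct, is unnecessary.
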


\begin{proof}
Let $x_{\ell}$ be the leftmost non-faulty robot.
From the proof of Lemma~\ref{lem:segment}, 
we know that $x_{\ell}$ will stay the leftmost
non-faulty robot at all steps of the execution of the algorithm.
Moreover,
from Lemma~\ref{lem:segment},
two scenarios are possible:
(1) $x_{\ell}$ enters the line segment $[x_0,x_n]$ after some time $t_0$ and for all $t \geq t_0$,
$x_{\ell}(t) \geq x_0$
or (2) $x_{\ell}(t)$ is strictly increasing for $t \geq 0$
and $x_{\ell}(t)$ converges to $0$ as $t\rightarrow \infty$.
\begin{enumerate}
\item In this case,
after time $t_0$, no robot will cross $x_0$.

\item In this case,
let $x$ be a robot and $t' > t_0$ be a time such that $x(t') < 0$, $x(t'+1) > 0$ and $x_0 - x_{\ell}(t') = 0 - x_{\ell}(t') = \delta < \frac{V}{2}$. Suppose that there is a time $t'' > t'$ such that
$x(t'') \geq 0$ and $x(t''+1) < 0$.
Since $x_{\ell}$ is the leftmost agent,
we have $l(x(t'')) \geq x_{\ell}(t'')$.
Moreover,
since $x(t''+1) < x_0 = 0$
and $x_{\ell}(t)$ is strictly increasing for $t \geq 0$,
we have
$$r(x(t'')) < x_0 - l(x(t'')) \leq x_0-x_{\ell}(t'') < x_0-x_{\ell}(t') = \delta < V/2 $$
and hence,
$$r(x(t'')) - x_{\ell}(t'') < r(x(t'')) - x_{\ell}(t') < V/2+V/2 = V .$$
Thus,
$l(x(t'')) = x_{\ell}(t'') = l(x_{\ell}(t''))$
and $r(x(t'')) = r(x_{\ell}(t''))$.
Consequently,
$x(t''+1) = x_{\ell}(t''+1)$.
In other words,
$x$ and $x_{\ell}$ merge.
Since $x_{\ell}(t)$ is strictly increasing for $t \geq 0$
and $x_{\ell}(t)$ converges to $0$ as $t\rightarrow \infty$,
$x$ will not cross $0$ anymore.

\end{enumerate}
A symmetric argument with the rightmost non-faulty robot $x_r$ completes the proof.
\end{proof}

%---------------------------------------------
\begin{lemma}
\label{lem:at most two outsiders}
Suppose that there is at least one robot $x$ that never enters the line segment $[x_0,x_n]$.
\begin{itemize}
\item If $x < x_0$,
then after a finite number of steps,
$x$ is the only robot on the left of $x_0$.

\item If $x > x_n$,
then after a finite number of steps,
$x$ is the only robot on the right of $x_n$.
\end{itemize}
\end{lemma}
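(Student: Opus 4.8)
The plan is to prove the first item (the case $x < x_0$); the second follows by the symmetric argument with the rightmost non-faulty robot and $x_n$. Throughout I take $x_0 = 0$ and treat merged robots as one. Two reductions set things up. First, by Lemma~\ref{lem:no-more-crossing} each non-faulty robot crosses the faulty robot $x_0$ only finitely often, and since there are finitely many robots there is a time $T$ after which no robot crosses $x_0$ at all. Hence the set $L$ of distinct positions strictly to the left of $x_0$ is constant for $t \ge T$, and every robot in $L$ stays to the left of $x_0$ forever, so in particular never enters $[x_0,x_n]$. Second, the hypothesis places us in scenario~2 of Lemma~\ref{lem:segment}, so the leftmost robot $x_\ell$ satisfies $x_\ell(t) \le x(t) < 0$ for all $t$ by Lemma~\ref{lem:no-crossing}; thus $x_\ell$ is strictly increasing with $x_\ell(t) \to 0$. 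Since every robot of $L$ lies in $[x_\ell(t),0)$ for $t \ge T$, a squeeze shows that all robots of $L$ converge to $0$.

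It then remains to show $|L| = 1$. I would argue that whenever $|L| \ge 2$ the two leftmost robots of $L$ merge in finite time, which decreases $|L|$; as there are finitely many robots, iterating reaches $|L| = 1$, and the surviving position left of $x_0$ is the robot we call $x$. So suppose $x_\ell$ and $x_1$ are the leftmost and second-leftmost robots of $L$. Both converge to $0$ from the left, so fix $\delta < V/2$ and a time $t \ge T$ with $x_\ell(t), x_1(t) \in (-\delta, 0)$. Their separation is below $\delta < V$, so they see each other, and because $x_\ell$ is globally leftmost we have $l(x_\ell(t)) = x_\ell(t)$ and $l(x_1(t)) = x_\ell(t)$: the two robots already share their leftmost visible neighbour. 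The crux is to show they also share their rightmost visible neighbour, i.e.\ $r(x_1(t)) = r(x_\ell(t))$; once both equalities hold, $x_\ell$ and $x_1$ compute the same midpoint, so $x_\ell(t+1) = x_1(t+1)$ and they merge (and stay merged thereafter).

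To establish $r(x_1(t)) = r(x_\ell(t))$ I would exploit that $x_1$ must remain to the left of $x_0$. Since $r(x_1(t)) \ge r(x_\ell(t))$ always, suppose for contradiction that $r(x_1(t)) > r(x_\ell(t))$. Then the robot realising $r(x_1(t))$ lies to the right of $r(x_\ell(t))$, hence is invisible to $x_\ell$, so its position exceeds $x_\ell(t) + V$. Combined with $l(x_1(t)) = x_\ell(t)$ this gives
$$x_1(t+1) = \frac{l(x_1(t)) + r(x_1(t))}{2} > \frac{x_\ell(t) + (x_\ell(t)+V)}{2} = x_\ell(t) + \frac{V}{2} > -\delta + \frac{V}{2} > 0,$$
so $x_1$ would enter $[x_0,x_n]$, contradicting $x_1 \in L$. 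Hence $r(x_1(t)) = r(x_\ell(t))$ and the merge occurs. The main obstacle is exactly this last step: one must use the visibility bound $V$ to turn ``staying forever to the left while converging to $x_0$'' into the rigid constraint that the rightmost neighbour of $x_1$ is pinned near $0$ (rather than near the visibility boundary), forcing the two leftmost robots to share both neighbours and coincide. The remaining ingredients — finiteness of crossings, constancy of $L$, and convergence to $0$ by squeezing — are routine consequences of the earlier lemmas.
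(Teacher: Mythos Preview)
Your argument is correct and follows essentially the same approach as the paper: both proofs show that any extra robot on the left of $x_0$ must eventually share its leftmost and rightmost visible neighbours with $x_\ell$, forcing a merge, and then finish by finiteness of the robot set. Your version is slightly more streamlined in that you first pass to a time $T$ after which no robot crosses $x_0$ (via Lemma~\ref{lem:no-more-crossing}), which lets you avoid the paper's separate case analysis for robots that temporarily enter $[x_0,x_n]$ and come back out; after that reduction, your contradiction ``$r(x_1(t)) > r(x_\ell(t)) \Rightarrow x_1(t+1) > 0$'' is exactly the paper's mechanism, just phrased contrapositively. One minor wording issue: you call $L$ ``the set of distinct positions strictly to the left of $x_0$'' and say it is constant, but positions move and later you decrease $|L|$ by merging; it would be cleaner to define $L$ as the set of (merged) robots to the left of $x_0$ at time $T$, note that its membership is fixed after $T$, and then argue that the number of distinct positions occupied by $L$ drops to $1$.
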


\begin{proof}
We prove this lemma by contradiction.
Suppose that for all $t \geq 0$,
there is a robot $x'$ outside of the line segment $[x_0,x_n]$.

Suppose that $x < x_0$
(the case where $x > x_n$ is symmetric).
Let $x'$ be a robot and $t$ be a time such that
\begin{itemize}
\item $x(t) < x'(t) < x_0(t)$,
\item $\epsilon = x_0 - x(t) < \frac{V}{2}$,
\item and if the rightmost non-faulty robot $x_r$ satisfies $x_r(t) > x_n$ and $x_r$ never enters the line segment $[x_0,x_n]$, then $x_r(t) - x_n < \frac{V}{2}$.
\end{itemize}
We consider two cases: (1) $x'$ eventually enters the line segment $[x_0,x_n]$
or (2) not.
\begin{enumerate}
\item If $x'$ enters the line segment $[x_0,x_n]$ and stays there,
then after a finite number of steps, it is not outside of the line segment $[x_0,x_n]$.

Therefore,
let us consider the case where $x'$ enters the line segment $[x_0,x_n]$ and it eventually gets out of $[x_0,x_n]$.
If $x'$ gets out of $[x_0,x_n]$ by crossing $x_0$,
then it merges with $x$
(refer to the proof of Lemma~\ref{lem:no-more-crossing}).
If $x'$ gets out of $[x_0,x_n]$ by crossing $x_n$,
then it merges with $x_r$
(refer to the proof of Lemma~\ref{lem:no-more-crossing}).

\item If $x'$ never enters the line segment $[x_0,x_n]$,
let $\delta = x_0 - x'(t)$.
Notice that $\delta < \epsilon < \frac{V}{2}$.
Since $x'$ never enters the line segment $[x_0,x_n]$,
$r(x'(t)) - x_0 < \delta < \frac{V}{2}$.
Therefore,
$r(x(t)) = r(x'(t))$
and $x(t) = l(x(t)) = l(x'(t))$.
Therefore,
$x(t+1) = \frac{l(x(t))+r(x(t))}{2} = \frac{l(x'(t))+r(x'(t))}{2} = x'(t+1)$.
\end{enumerate}
In all cases,
if there is a robot $x'$ between $x$ and $x_0$,
then after a finite number of steps,
$x'$ enters the line segment $[x_0,x_n]$ and stays there
or $x'$ merges with another robot.
Since we have a finite number of robots,
after a finite number of steps,
$x$ will be the only robot satisfying $x < x_0$.
\end{proof}

We now show that during the evolution of the system,
a robot never loses visibility of the robots seen in the past.
%------------------------------------------------
\begin{lemma}[Preserved Visibility]
\label{lem:preserved-visibility}
Let $y \in N(x(t))$. For all $t'>t$, $y \in N(x(t'))$.
\end{lemma}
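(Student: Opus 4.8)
The plan is to prove the statement by induction on the number of steps, so that it suffices to establish the one-step version: if $|x(t)-y(t)| \leq V$ then $|x(t+1)-y(t+1)| \leq V$; iterating this then yields $y \in N(x(t'))$ for every $t' > t$. Without loss of generality I would take $x(t) \leq y(t)$, so that mutual visibility reads $0 \leq y(t)-x(t) \leq V$, and I would use that visibility is symmetric, so $x \in N(y(t))$ as well. I would then bound $|x(t+1)-y(t+1)|$ from both sides by $V$, being careful never to invoke Lemma~\ref{lem:no-crossing} (which governs only pairs of non-faulty robots): since either $x$ or $y$ could be faulty and hence stationary, the pair might in principle swap order, so the absolute difference must be controlled directly.

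For the easy direction $x(t+1)-y(t+1) \leq V$, I would record a uniform single-step displacement bound valid for every robot $z$, faulty or not: $z(t)-V/2 \leq z(t+1) \leq z(t)+V/2$. For a faulty robot this is trivial, and for a non-faulty one it follows from $z(t+1) = \frac{l(z(t))+r(z(t))}{2}$ together with $z(t)-V \leq l(z(t)) \leq z(t) \leq r(z(t)) \leq z(t)+V$, the inner inequalities holding because $z$ sees itself. Substituting gives $x(t+1)-y(t+1) \leq (x(t)+V/2)-(y(t)-V/2) = (x(t)-y(t))+V \leq V$.

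The direction $y(t+1)-x(t+1) \leq V$ is where mutual visibility does the work, and it is the step I expect to be the crux, because it must hold uniformly across the faulty/non-faulty status of each of the two robots. I would establish the two one-sided estimates $x(t+1) \geq \frac{(x(t)-V)+y(t)}{2}$ and $y(t+1) \leq \frac{x(t)+(y(t)+V)}{2}$. When $x$ is non-faulty, the first follows from $l(x(t)) \geq x(t)-V$ and $r(x(t)) \geq y(t)$ (the rightmost robot visible from $x$ lies at least at the visible $y$ to its right); when $x$ is faulty it reduces to $x(t) \geq \frac{(x(t)-V)+y(t)}{2}$, which is exactly $y(t)-x(t) \leq V$. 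Symmetrically, the second estimate uses $l(y(t)) \leq x(t)$ and $r(y(t)) \leq y(t)+V$ when $y$ is non-faulty, and reduces to $y(t)-x(t) \leq V$ when $y$ is faulty. Subtracting the two estimates gives $y(t+1)-x(t+1) \leq \frac{(x(t)+y(t)+V)-(x(t)-V+y(t))}{2} = V$, which combined with the previous paragraph closes the induction. The only subtlety to watch is precisely this uniform treatment of stationary (faulty) robots, together with the decision to bound the absolute difference directly rather than routing through order preservation.
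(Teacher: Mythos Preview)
Your argument is correct. The underlying inequalities are the same ones the paper uses---$l(z(t)) \geq z(t)-V$, $r(z(t)) \leq z(t)+V$, and, crucially, $r(x(t)) \geq y(t)$ and $l(y(t)) \leq x(t)$ coming from mutual visibility---but your decomposition differs from the paper's. The paper splits into three cases according to how many of $x,y$ are faulty; in the both-non-faulty case it invokes Lemma~\ref{lem:no-crossing} to pin down the sign of $x(t+1)-y(t+1)$ and then only bounds the remaining side, while in the one-faulty case it treats the two possible signs separately. You instead establish two uniform one-sided estimates, each valid whether the robot in question is faulty or not, and combine them; the $V/2$-displacement bound handles the ``wrong'' direction in one stroke. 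This buys you a cleaner, case-free presentation that is self-contained (no appeal to Lemma~\ref{lem:no-crossing}); the paper's version makes the role of each robot's status more explicit but at the cost of the three-way split. Either organization is fine here.
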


\begin{proof}
Let $y \in N(x(t))$. Hence, we have $|x(t) - y(t)| \leq V$.
Without loss of generality, suppose that $y(t)$ is to the left of $x(t)$, from which $0 < x(t) - y(t) \leq V$.
We consider three cases: (1) $x$ and $y$ are non-faulty, (2) exactly one of $x$ and $y$ is faulty, or (3) both $x$ and $y$ are faulty.
\begin{enumerate}
\item In this case, by Lemma~\ref{lem:no-crossing},
$x(t+1) - y(t+1) \geq 0$.
We have
\begin{align*}
x(t+1) - y(t+1) &= \frac{l(x(t)) + r(x(t))}{2} - \frac{l(y(t)) + r(y(t))}{2} \\
&\leq \frac{y(t) + (x(t) + V)}{2} - \frac{(y(t) - V) + x(t)}{2} \\
&= V .
\end{align*}
\item Without loss of generality, suppose that $x$ is faulty and $y$ is non-faulty. If $x(t+1) - y(t+1) \geq 0$,
we have
\begin{align*}
x(t+1) - y(t+1) &= x(t) - \frac{l(y(t)) + r(y(t))}{2} \\
&\leq x(t) - \frac{(y(t) - V) + x(t)}{2} \\
&= \frac{x(t) - y(t) + V}{2} \\
&\leq \frac{V + V}{2} \\
&= V .
\end{align*}

If $y(t+1) - x(t+1) \geq 0$,
we have
\begin{align*}
y(t+1) - x(t+1) &= \frac{l(y(t)) + r(y(t))}{2} - x(t) \\
&\leq \frac{x(t) + (y(t) + V)}{2} - x(t)\\
&= \frac{y(t) - x(t) + V}{2} \\
&\leq \frac{0 + V}{2} \\
&< V .
\end{align*}

\item In this case, we have $x(t+1) = x(t)$ and $y(t+1) = y(t)$, so the result follows.
\end{enumerate}
\end{proof}

A robot never loses visibility of the robots seen in the past; 
however, notice that new robots could enter its visibility range (\emph{inclusion}). 
It is also possible for   robots to  merge and occupy the same position (\emph{merging}). 
Once some robots occupy the same position they act as one single robot
(except possibly for a non-faulty robot merging with a faulty one).

\begin{definition}[Size-Stable Time]\
A time $t_0$ is called a \emph{size-stable time} if,
for all $t \geq t_0$,
\begin{itemize}
\item there will be no inclusions,
mergings or crossings in the system,

\item and either all agents are
inside the line segment $[x_0,x_n]$
or at most one agent is  on each side
of the line segment $[x_0,x_n]$
and stay outside of $[x_0,x_n]$.
Moreover,
the two outsiders converge to $x_0$ and $x_n$, respectively.
\end{itemize}
\end{definition}
Observe that if $t_0$ is a size-stable time,
then $t$ is a size-stable time for all $t\geq t_0$.

%
%  we can discard the two possible outsiders, as their behaviour does not influence the dynamics of the other robots.
%{\bf Comment:}
%
%
%From Lemmas~\ref{lem:no-crossing},
%\ref{lem:no-more-crossing},
%\ref{lem:at most two outsiders}
%and~\ref{lem:preserved-visibility},
%and the fact that there is a finite number of robots,
%we get the following corollary.

From Lemmas~\ref{lem:no-crossing} and~\ref{lem:no-more-crossing},
after a finite number of steps,
no two robots are \emph{crossing} each others.
From Lemma~\ref{lem:at most two outsiders},
either all robots are inside the line segment $[x_0,x_n]$
after a finite number of steps,
or at most two robots will stay outside of the line segment $[x_0,x_n]$ for all time $t\geq 0$.
We then get  the  following corollary.

%-------------------------------------------
\begin{corollary}
\label{cor:stable}
For all set of robots $X$,
there exists a size-stable time $t_0$.
\end{corollary}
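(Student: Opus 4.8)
The plan is to show that each of the three events forbidden by the definition of a size-stable time---inclusions, mergings, and crossings---can occur only finitely many times, and that the outsider behaviour also settles in a finite number of steps; the size-stable time $t_0$ is then obtained as the maximum of the finitely many thresholds thus produced. Since a maximum of finitely many finite quantities is finite, this yields a finite $t_0$ after which the full conjunction of conditions in the definition holds for all $t \geq t_0$.

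First I would bound crossings. By Lemma~\ref{lem:no-crossing}, two non-faulty robots never exchange order, so no crossing ever occurs between a pair of non-faulty robots. By Lemma~\ref{lem:no-more-crossing}, each non-faulty robot crosses a faulty robot only finitely many times; with only two faulty robots and finitely many non-faulty ones, the total number of crossing events is finite, so there is a time $t_1$ after which no crossing occurs. Next I would bound inclusions using Lemma~\ref{lem:preserved-visibility}: visibility is monotone, so once a pair of robots becomes mutually visible it remains so forever. Tracking robots by their identity, each of the finitely many pairs can therefore become newly visible at most once, giving a finite time $t_2$ after which no robot enters anyone's visibility range.

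For mergings I would argue that once two non-faulty robots occupy the same position they act as one thereafter, so every such merging permanently decreases the number of distinct non-faulty robots; starting from finitely many, only finitely many such mergings can happen, yielding a time $t_3$ after which no merging occurs. Any transient coincidence of a non-faulty robot with a faulty one is tied to a crossing of that faulty robot and is thus already accounted for by $t_1$. Finally, the outsider structure follows from Lemmas~\ref{lem:segment} and~\ref{lem:at most two outsiders}: either all robots enter $[x_0,x_n]$ after some finite time, or at most one robot persists outside on each side, with the left (resp.\ right) outsider converging to $x_0$ (resp.\ $x_n$); let $t_4$ be the time after which this dichotomy is fixed. Setting $t_0 = \max\{t_1,t_2,t_3,t_4\}$ gives the claimed size-stable time.

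The hard part is not any single bound---each follows directly from a preceding lemma---but rather justifying that the three ``no more events'' guarantees hold \emph{simultaneously} for all $t \geq t_0$, rather than merely eventually and in isolation. This is legitimate precisely because each threshold is derived from a quantity that is eventually constant: the (order-preserving) relative order of robots, the (nondecreasing) set of mutually visible pairs, and the (nonincreasing) number of distinct non-faulty robots. Once each of these has reached its final value it never changes again, so no forbidden event can reappear after $t_0$, and the conjunction in the definition of a size-stable time is satisfied.
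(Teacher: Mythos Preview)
Your argument is correct and follows the same approach as the paper: derive finiteness of crossings from Lemmas~\ref{lem:no-crossing} and~\ref{lem:no-more-crossing}, of inclusions from the monotonicity in Lemma~\ref{lem:preserved-visibility}, of mergings from the fact that each one permanently reduces the number of distinct non-faulty robots, and settle the outsider clause via Lemmas~\ref{lem:segment} and~\ref{lem:at most two outsiders}, then take the maximum of the resulting thresholds. The paper's own justification is terser---it explicitly names only the crossing and outsider lemmas and leaves the inclusion and merging bounds implicit in the discussion following Lemma~\ref{lem:preserved-visibility}---so your version is, if anything, a more complete spelling-out of the same proof.
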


Finally, from Lemmas~\ref{lem:no-crossing}, \ref{lem:no-more-crossing} and~\ref{lem:preserved-visibility},
and Corollary~\ref{cor:stable},
we can conclude that at any time after a size-stable time $t$ is reached, the farthest left and right neighbours, 
namely $l(x(t))$ and $r(x(t))$, of any robot $x$ will never change.

%-------------------------------------------
\begin{theorem}[Preserved-farthest-neighbours]
\label{thm:preserved-farthest}
Let $t$ be a size-stable time and $x\in {\cal R}$ be a robot.
For all $t'>t$, $r(x(t')) = r(x(t))$
and  $l(x(t')) = l(x(t))$.
\end{theorem}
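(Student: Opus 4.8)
The plan is to show that after a size-stable time the visible set of every robot is frozen both in membership and in left-to-right order, so that its extreme elements cannot change. First I would fix the robot $x$ and the size-stable time $t$, and record the two facts I get for free for every $t' \geq t$: by the definition of a size-stable time there are no inclusions, mergings, or crossings after $t$; and by the Preserved Visibility lemma (Lemma~\ref{lem:preserved-visibility}) every robot currently visible to $x$ stays visible, i.e.\ $N(x(t)) \subseteq N(x(t'))$.

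Next I would pin down the visible set exactly. The inclusion $N(x(t)) \subseteq N(x(t'))$ coming from Lemma~\ref{lem:preserved-visibility} only says visibility is never lost; the reverse inclusion $N(x(t')) \subseteq N(x(t))$ is precisely the statement that no inclusion occurs after a size-stable time, which holds by definition. Combining the two yields $N(x(t')) = N(x(t))$ for all $t' \geq t$, so after a size-stable time the set of robots seen by $x$ is constant in membership.

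It then remains to argue that the leftmost and rightmost elements of this fixed set keep their identity, and here I would invoke the absence of crossings. For any two non-faulty robots the No Crossing lemma (Lemma~\ref{lem:no-crossing}) preserves their relative order at every step; crossings between a non-faulty and a faulty robot are finite by Lemma~\ref{lem:no-more-crossing} and none remain past $t$ by the definition of size-stability, and no merging occurs either. Hence the entire left-to-right ordering of the robots is preserved for all $t' \geq t$. Since $N(x(t'))$ is a fixed set whose internal order is preserved, its minimum and maximum are realized by the same robots at time $t'$ as at time $t$, and these are exactly $l(x(t))$ and $r(x(t))$; therefore $l(x(t')) = l(x(t))$ and $r(x(t')) = r(x(t))$.

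The argument is essentially a bookkeeping synthesis of the preceding lemmas, so I expect no single deep obstacle. The step that needs the most care is the order-preservation claim: one must verify it covers the faulty robots and the at-most-two perpetual outsiders, and confirm that \emph{preserved order} together with \emph{fixed membership} genuinely forces the extreme visible robots to retain their identity even though all positions keep drifting toward their limits. That is where I would be most explicit, rather than in the routine set-equality step.
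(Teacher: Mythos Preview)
Your proposal is correct and follows exactly the route the paper takes: the paper itself does not give a standalone proof but simply states that the theorem follows from Lemmas~\ref{lem:no-crossing}, \ref{lem:no-more-crossing}, \ref{lem:preserved-visibility} and Corollary~\ref{cor:stable}. Your write-up is in fact more explicit than the paper's, since you spell out the two inclusions giving $N(x(t')) = N(x(t))$ and the order-preservation argument that pins down the extremes.
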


For the rest of the paper,
we suppose that the earliest size-stable time is $0$.
Thus, from Corollary~\ref{cor:stable},
for all $t\geq 0$, $t$ is a size-stable time.

\subsection{Convergence of Mutual Chains}

We now define the notion of \emph{mutual chain}
as a set of robots that are mutually the farthest from each other.

\begin{definition}[Mutual Chain]
Let $0 \leq k \leq n$ be an integer and $t \geq 0$ be any size-stable time.
A \emph{mutual chain at time $t$} (or \emph{mutual chain} for short) is a configuration $C(t) = \{x_1'(t), x_2'(t), ..., x_k'(t)\} \subset X(t)$ made of $k$ robots such that
for all $1 \leq i \leq k-1$,
$l(x_{i+1}'(t)) = x_i'(t)$
and $r(x_i'(t)) = x_{i+1}'(t)$
(refer to Figure~\ref{fig:mutual-chain}).

If $r(x_i(t)) = x_j(t)$ and $l(x_j(t)) = x_i(t)$,
we say that $x_i$ and $x_j$ are
\emph{mutually chained at time $t$}
or that $x_i(t)$ and $x_j(t)$ are \emph{mutually chained}.
\end{definition}
\begin{figure}[tbh]
\centering
\includegraphics[scale = 1]{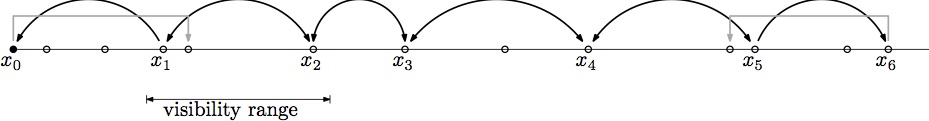}
\caption{A mutual chain of robots $C(t) = \{x_1(t),x_2(t),x_3(t),x_4(t),x_5(t)\}$
anchored in $x_0$ and $ x_6$,
where the arrows indicate farthest visibility.
\label{fig:mutual-chain}}
\end{figure}

The \emph{anchors} of a mutual chain
$C(t) = \{x_1'(t), x_2(t), ..., x_k'(t)\}$
are the farthest left neighbour of $x_1'(t)$ and the farthest right neighbour of $x_k'(t)$.
\begin{definition}[Anchors]
Given a mutual chain $C(t) = \{x_1'(t), x_2'(t), \ldots, x_k'(t)\}$,
we say that $l(x_1'(t))$ and $r(x_k'(t))$
are the left and right \emph{anchors} of $C(t)$
(or that $C(t)$ is \emph{anchored}
at $l(x_1'(t))$ and $r(x_k'(t))$)
(refer to Figure~\ref{fig:mutual-chain}).
\end{definition}

Note that the definition of anchor allows the anchors of a mutual chain
to be part of the mutual chain
(refer to Figure~\ref{fig:mutual-chain-anchors-faulty}).
\begin{figure}[tbh]
\centering
\includegraphics[scale = 1]{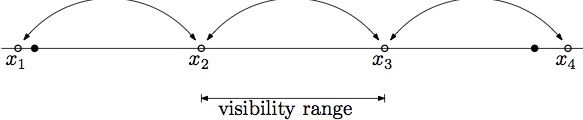}
\caption{The configuration $\{x_1,x_2,x_3,x_4\}$ is a mutual chain.
It is anchored at $x_1$ and $x_4$.
\label{fig:mutual-chain-anchors-faulty}}
\end{figure}
%The anchors do not have to be faulty robots
%for this situation to happen.
Moreover, the definition of mutual chain allows a mutual chain
to possibly contain only one robot
(refer to Figure~\ref{fig:mutual-chain-one-robot}).
\begin{figure}[tbh]
\centering
\includegraphics[scale = 1]{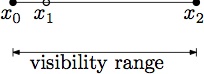}
\caption{The configuration $\{x_1\}$ is a mutual chain.
It is anchored at $x_0$ and $x_2$.
\label{fig:mutual-chain-one-robot}}
\end{figure}
Note that the anchors do not have to be faulty robots
for this situation to happen.
Indeed,
any robot $x$ forms a mutual chain $\{x(t)\}$
anchored at $l(x(t))$ and $r(x(t))$.

We now prove the formation, during the execution of the algorithm, of a special  unique mutual chain called
\emph{primary chain}. Intuitively, the primary chain is a mutual chain starting from $x_0$ and ending in $x_n$.
 We will then introduce a hierarchical   notion of  mutual chains with different levels, where
  chains of some level are  anchored in   lower level ones. Moreover, we will show that the robots will 
  eventually arrange themselves in such a hierarchical structure of mutual chains.

\begin{theorem}[Primary Chain]
\label{thm:primary}
There exists a configuration of robots
$\mathcal{C}_1 = \{x_0',x_1',x_2',...,x_k'\}\subseteq X$
such that at any size-stable time $t > 0$,
$\mathcal{C}_1(t)$ is a mutual chain anchored at
$x_0$ and $x_n$,
where $x_0'=x_0$ and $x_k'=x_n$.
This mutual chain is called the \emph{primary chain} of $X$
and it is unique.
\end{theorem}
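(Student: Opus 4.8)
The plan is to treat uniqueness and existence separately, using first the fact that by Theorem~\ref{thm:preserved-farthest} the maps $r(\cdot)$ and $l(\cdot)$ are frozen once we are past a size-stable time. Since we assume the earliest size-stable time is $0$, the chain we seek is a single time-independent set of robots, so it suffices to exhibit it at one size-stable instant. \emph{Uniqueness is the easy half}: in any mutual chain $\{x_0',\dots,x_k'\}$ with $x_0'=x_0$, the defining relations force $x_1'=r(x_0')$, then $x_2'=r(x_1')$, and so on. Since $r(\cdot)$ is a deterministic function of the (frozen) configuration, the whole chain is determined by its left endpoint $x_0$; hence at most one mutual chain starts at $x_0$, and the problem reduces to showing that the forced sequence $a_0=x_0$, $a_{i+1}=r(a_i)$ is a genuine mutual chain anchored at $x_n$.

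Existence then has two ingredients: (A) the sequence $(a_i)$ reaches $x_n$, and (B) each step is \emph{mutual}, i.e.\ $l(a_{i+1})=a_i$. For (A), note $a_{i+1}=r(a_i)\ge a_i$ with equality only when $a_i$ is the rightmost robot, so by Lemma~\ref{lem:no-crossing} the positions $a_0<a_1<\cdots$ strictly increase until the rightmost robot is reached; as $X$ is finite, the sequence stabilises there in finitely many steps. By Lemmas~\ref{lem:segment}, \ref{lem:no-more-crossing} and~\ref{lem:at most two outsiders}, at a size-stable time every non-faulty robot either lies in $[x_0,x_n]$ or is the unique outsider converging to $x_0$ or to $x_n$; treating these at-most-two outsiders as eventually merging into the endpoints, the rightmost robot relevant to the chain is $x_n$, so $a_k=x_n$ for some $k$. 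The base case of (B) is immediate: since $x_0$ is (effectively) the leftmost robot, $x_0\le l(a_1)\le a_0=x_0$, whence $l(a_1)=x_0=a_0$.

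The core of the proof, and the step I expect to be the main obstacle, is the \emph{no-overshoot} property (B): the jump $r(a_i)=a_{i+1}$ must never skip over a robot that $a_{i+1}$ can still see to the left of $a_i$. I would argue by induction on $i$, assuming $l(a_j)=a_{j-1}$ for $j\le i$ and deriving a contradiction from an overshoot $l(a_{i+1})=w<a_i$. First, $a_i$ is visible from $a_{i+1}$ (as $a_{i+1}=r(a_i)$) and $w=l(a_{i+1})$ gives $a_{i+1}-w\le V$, so $w$ is visible from $a_i$; by monotonicity of $r(\cdot),l(\cdot)$ in position, $r(w)=r(a_i)=a_{i+1}$ and $a_{i-1}=l(a_i)\le w<a_i$. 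The degenerate possibility $w=a_{i-1}$ forces $r(a_{i-1})=a_i\ne a_{i+1}$ through the inductive chain relations, a contradiction, so $w$ is a non-chain robot strictly between $a_{i-1}$ and $a_i$. Using $r(w)=r(a_i)=a_{i+1}$, the midpoint rule yields
\[
a_i(t+1)-w(t+1)=\frac{l(a_i)(t)-l(w)(t)}{2}=\frac{a_{i-1}(t)-l(w)(t)}{2},
\]
so the gap between $a_i$ and $w$ at time $t+1$ is exactly half the gap between $a_{i-1}$ and $l(w)$ at time $t$, where $l(w)\le l(a_i)=a_{i-1}$. When $l(w)=a_{i-1}$ this gap is $0$, i.e.\ $a_i$ and $w$ \emph{merge}, contradicting size-stability; otherwise $l(w)$ lies strictly to the left of $a_{i-1}$, and applying the same identity repeatedly pushes the discrepancy leftward along the frozen neighbour structure. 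The delicate, and I expect most technical, part is to show that this leftward propagation always terminates in a merging or a crossing once it reaches the immovable boundary $x_0$, again contradicting the ``no merging / no crossing'' guarantee of a size-stable time. This is precisely what rules out the overshoot and establishes (B).

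Once (A) and (B) hold, $\{a_0,\dots,a_k\}$ is a mutual chain with $a_0=x_0$, $a_k=x_n$, and left anchor $l(a_0)=x_0$, right anchor $r(a_k)=x_n$ (the latter because $x_n$ is effectively the rightmost robot), so it is anchored at $x_0$ and $x_n$. Combined with the uniqueness observation, this chain is the desired primary chain $\mathcal{C}_1$.
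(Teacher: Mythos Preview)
Your uniqueness argument and the construction of the forward chain $a_0=x_0,\ a_{i+1}=r(a_i)$ agree with the paper, and your reduction of existence to the ``no-overshoot'' property (B) is the right way to frame the problem. The gap is precisely in the step you flag as ``delicate'': the leftward propagation does not iterate as you suggest. Your identity
\[
a_i(t+1)-w(t+1)=\tfrac12\bigl(a_{i-1}(t)-l(w)(t)\bigr)
\]
is correct, and it rests on $r(w)=r(a_i)=a_{i+1}$. But to repeat the argument one level down you would need the analogous identity for the pair $(a_{i-1},w')$ with $w'=l(w)$, and that requires $r(w')=r(a_{i-1})=a_i$. You only know $w\le r(w')\le a_i$; whenever $a_i-a_{i-1}$ is close to $V$ and $a_{i-1}-w'>0$, the point $a_i$ may be beyond the visibility range of $w'$, so $r(w')$ can be a robot strictly between $w$ and $a_i$. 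Then
\[
a_{i-1}(t+1)-w'(t+1)=\tfrac12\bigl(a_{i-2}(t)-l(w')(t)\bigr)+\tfrac12\bigl(a_i(t)-r(w')(t)\bigr),
\]
and the second summand is a nonnegative term that does \emph{not} vanish. So the gap between $a_{i-1}$ and $w'$ is not half of the next gap down; it is at least that, which is the wrong inequality for forcing a merge. In short, the recursion you need is broken by the very phenomenon you are trying to rule out: misalignment of the $r(\cdot)$-structure of the interlopers with that of the $a_i$.

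The paper's proof avoids this obstruction by abandoning the attempt to derive a finite-time merge. It builds, in addition to your forward chain, the \emph{backward} chain $y_{j+1}=x_n$, $y_i=l(y_{i+1})$, shows the interleaving $x'_{i-1}<y_i\le x'_i$, and proves that $s_i(t)=x'_i(t)-y_i(t)\to 0$ via the averaging inequality $s_i(t+1)\le\tfrac12(s_{i-1}(t)+s_{i+1}(t))$. The contradiction is then a \emph{length} argument: if the forward chain is not mutual, one shows (using Lemma~\ref{lemma:distance-propagation}) that every gap $x'_{i+1}(t)-x'_i(t)\to V$, forcing $x_n-x_0$ arbitrarily close to $(j+1)V$, whereas non-mutuality implies $x_n-x_0<(j+1)V$. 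The two crucial ideas you are missing are (i) working with the backward chain simultaneously, which tames the misalignment, and (ii) replacing ``a merge must occur'' by an asymptotic statement about inter-robot distances.
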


Before we prove Theorem~\ref{thm:primary},
we need the following technical lemma.
Intuitively,
when the distance between two mutually chained robots
tends to $V$ (as $t\rightarrow\infty$),
this limit behaviour propagates to the leftmost and rightmost visible robots.
\begin{lemma}
\label{lemma:distance-propagation}
Let $x_{\alpha+1}',x_{\alpha+2}'\in X$ such that for all $t\geq 0$,
(refer to Figure~\ref{fig:propagation})
\begin{figure}[tbh]
\centering
\includegraphics[scale = 1]{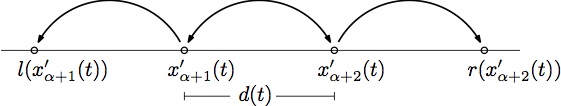}
\caption{Illustration of
Lemma~\ref{lemma:distance-propagation}.
\label{fig:propagation}}
\end{figure}
\begin{itemize}
\item $x_{\alpha+1}'(t)$ and $x_{\alpha+2}'(t)$ are mutually chained,
\item $d(t) = x_{\alpha+2}'(t) - x_{\alpha+1}'(t) \rightarrow V$,
as $t\rightarrow \infty$
\item $l(x_{\alpha+1}'(t)) \neq x_{\alpha+1}'(t)$
\item and $r(x_{\alpha+2}'(t)) \neq x_{\alpha+2}'(t)$.
\end{itemize}
Then $r(x_{\alpha+2}'(t)) - x_{\alpha+2}'(t) \rightarrow V$
and $x_{\alpha+1}'(t) - l(x_{\alpha+1}'(t)) \rightarrow V$,
as $t\rightarrow \infty$.
\end{lemma}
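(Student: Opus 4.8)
The plan is to convert the hypothesis $d(t)\to V$ into a recurrence that ties $d$ to the two outward-facing gaps, and then to finish with a squeeze forced by the visibility ceiling $V$. First I would set up notation: working at or past a size-stable time, Theorem~\ref{thm:preserved-farthest} keeps the farthest neighbours fixed, so I write $a(t)=l(x_{\alpha+1}'(t))$ and $b(t)=r(x_{\alpha+2}'(t))$ and introduce the gaps $L(t)=x_{\alpha+1}'(t)-a(t)$ and $R(t)=b(t)-x_{\alpha+2}'(t)$. In this notation the two target conclusions read simply $L(t)\to V$ and $R(t)\to V$.

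Next I would check that $x_{\alpha+1}'$ and $x_{\alpha+2}'$ are non-faulty, so that they actually execute the midpoint rule of \textsc{Convergence1D}. The hypothesis $l(x_{\alpha+1}')\neq x_{\alpha+1}'$ shows $x_{\alpha+1}'\neq x_0$ (the leftmost faulty robot sees nothing to its left), and $d(t)\to V>0$ places $x_{\alpha+2}'$ strictly to its right, so $x_{\alpha+1}'\neq x_n$ either; the argument for $x_{\alpha+2}'$ is symmetric. Then, using the mutual-chaining hypotheses $r(x_{\alpha+1}'(t))=x_{\alpha+2}'(t)$ and $l(x_{\alpha+2}'(t))=x_{\alpha+1}'(t)$, the update rules become
\begin{align*}
x_{\alpha+1}'(t+1) &= \tfrac{1}{2}\left(a(t)+x_{\alpha+2}'(t)\right),\\
x_{\alpha+2}'(t+1) &= \tfrac{1}{2}\left(x_{\alpha+1}'(t)+b(t)\right).
\end{align*}

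The crux of the proof is a single identity obtained by subtracting these:
$$d(t+1)=x_{\alpha+2}'(t+1)-x_{\alpha+1}'(t+1)=\frac{\left(x_{\alpha+1}'(t)-a(t)\right)+\left(b(t)-x_{\alpha+2}'(t)\right)}{2}=\frac{L(t)+R(t)}{2}.$$
Because $a(t)$ is visible from $x_{\alpha+1}'(t)$ and $b(t)$ from $x_{\alpha+2}'(t)$, I have $0\le L(t)\le V$ and $0\le R(t)\le V$. Since $d(t)\to V$ gives $L(t)+R(t)=2d(t+1)\to 2V$, the nonnegative quantities $V-L(t)$ and $V-R(t)$ sum to something tending to $0$, forcing each to $0$; hence $L(t)\to V$ and $R(t)\to V$, which is exactly the statement.

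I expect the only genuinely delicate point to be recognizing the recurrence $d(t+1)=(L(t)+R(t))/2$; once it is in hand, the conclusion is a pinching argument against the upper bound $V$ and needs no further estimation. The remaining care is bookkeeping: verifying that both robots obey the midpoint rule (which is precisely where the two ``$\neq$'' hypotheses and size-stability enter) and noting that these hypotheses also keep the conclusion well-posed, since a robot with no neighbour on the relevant side would have the corresponding gap identically zero.
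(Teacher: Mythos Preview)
Your proof is correct and follows essentially the same route as the paper: both derive the identity $d(t+1)=\tfrac{1}{2}\bigl((x_{\alpha+1}'(t)-l(x_{\alpha+1}'(t)))+(r(x_{\alpha+2}'(t))-x_{\alpha+2}'(t))\bigr)$ from the midpoint update and the mutual-chaining hypothesis, and then squeeze each outward gap to $V$ using $d(t+1)\to V$ together with the visibility ceiling. Your $L(t),R(t)$ are the paper's $V-\delta_1(t),\,V-\delta_2(t)$; the only addition is your explicit check that the two robots are non-faulty, which the paper leaves implicit (and which, strictly speaking, relies on the ambient assumption $X(t)\subset[x_0,x_n]$ made just before the lemma is used).
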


\proof
We have
\begin{align*}
x_{\alpha+1}'(t+1) &= l(x_{\alpha+1}'(t)) + \frac{x_{\alpha+1}'(t) - l(x_{\alpha+1}'(t))+d(t)}{2}, \\
x_{\alpha+2}'(t+1) &= x_{\alpha+1}'(t) + \frac{d(t) + r(x_{\alpha+2}'(t)) - x_{\alpha+2}'(t)}{2} .
\end{align*}
Since $x_{\alpha+1}'$ and $x_{\alpha+2}'$ are mutually chained and $d(t) \rightarrow V$ as $t\rightarrow \infty$,
there is a function $\epsilon(t)$ such that
$\epsilon(t) \rightarrow 0$ as $t\rightarrow \infty$ and
\begin{align}
\nonumber
d(t+1) &= x_{\alpha+2}'(t+1) - x_{\alpha+1}'(t+1) \\
\nonumber
&= \left(x_{\alpha+1}'(t) + \frac{d(t) + r(x_{\alpha+2}'(t)) - x_{\alpha+2}'(t)}{2}\right) - \left(l(x_{\alpha+1}'(t)) + \frac{x_{\alpha+1}'(t) - l(x_{\alpha+1}'(t))+d(t)}{2}\right)\\
\nonumber
&= \frac{x_{\alpha+1}'(t)-l(x_{\alpha+1}'(t))+r(x_{\alpha+2}'(t))-x_{\alpha+2}'(t)}{2} \\
\label{ineq:propagation}
&> V - \epsilon(t).
\end{align}
Let $\delta_1(t)$ and $\delta_2(t)$ be two functions such that
$V - \delta_1(t) = x_{\alpha+1}'(t)-l(x_{\alpha+1}'(t))$
and $V - \delta_2(t) = r(x_{\alpha+2}'(t))-x_{\alpha+2}'(t)$.
Since $l(x_{\alpha+1}'(t)) \neq x_{\alpha+1}'(t)$ and $r(x_{\alpha+2}'(t)) \neq x_{\alpha+2}'(t)$,
we have $0< \delta_1(t) \leq V$
and $0< \delta_2(t) \leq V$.
Therefore,
from~\eqref{ineq:propagation},
we get
$$\frac{V-\delta_1(t) + V-\delta_2(t)}{2} > V-\epsilon(t),$$
from which
$$0 < \frac{\delta_1(t)+\delta_2(t)}{2} < \epsilon(t) \rightarrow 0,$$
as $t\rightarrow \infty$.
This means that $\delta_1(t) \rightarrow 0$
and $\delta_2(t) \rightarrow 0$
as $t\rightarrow \infty$,
from which
$x_{\alpha+1}'(t)-l(x_{\alpha+1}'(t)) \rightarrow V$ and $r(x_{\alpha+2}'(t))-x_{\alpha+2}'(t)\rightarrow V$,
as $t\rightarrow \infty$.
\qed

\begin{proof}~(Theorem~\ref{thm:primary})

[Uniqueness]
We first explain that if the primary chain exists,
then it is unique.
Since $x_0 = x_0'$ and $x_n = x_k'$ are part of the mutual chain,
starting at $x_0$,
we get $x_1' = r(x_0)$
and $x_{i+1}' = r(x_i')$ for all $0\leq i \leq k-1$,
where $x_k' = x_n$.
So each $x_i'$ is uniquely defined.

[Existence]
We now prove that the primary chain does exist.
By Lemma~\ref{lem:at most two outsiders},
at any size-stable time $t$,
there is at most one robot $x_{\ell}$ to the left of $x_0$
which will never enter $[x_0,x_n]$
and there is at most one robot $x_r$ to the right of $x_n$
which will never enter $[x_0,x_n]$.
Moreover,
as $t \rightarrow \infty$,
$x_{\ell} \rightarrow x_0$
and $x_r \rightarrow x_n$.
Therefore,
without loss of generality,
we can ignore $x_{\ell}$ and $x_r$.
For the rest of the proof,
we suppose that $X(t) \subset [x_0,x_n]$
for any size-stable time $t$.
We need to prove that the primary chain exists.

We prove the existence of the primary chain
by contradiction.

Let us summarize the steps of the proof. We assume that there does not exist any  mutual chain. 
1) We   construct  a particular configuration,     composed by 
a forward-chain from $x_0$  connecting each node to its farthest right neighbour  till $x_n$  and a  backward chain from $x_n$
 connecting each node to its farthest left  neighbour  back to $x_0$. 
2) We then  show that the two chains converge to each other, i.e., they converge to a single chain, called {\em right-left chain}.
This construction does not directly guarantee that the right-left chain is a mutual chain.  We then show a contradiction, 
reasoning on the total length of the segment delimited by $x_0$ and $x_n$.
3) A consequence of the  right-left chain not being a mutual chain is that the total length of the segment between $x_0$ and $x_n$ is  strictly  smaller than 
$ (j+1)V $ (where $j+1$ is the number of intervals between consecutive robots in the chain).  
4)  On the other hand,  each such interval  converges to $V$, thus implying 
 that the  total length of the segment is a number arbitrarily close to $ (j+1)V $. The contradiction implies that the right-left chain is indeed mutual.

%Suppose that there does not exist such a mutual chain.
\paragraph{\bf 1) Construction of   forward and backward chains.} Let us consider a configuration of robots
$\{x_0'(t),x_1'(t),...,x_{j+1}'(t)\}\subseteq X(t)$, called {\em forward chain} (refer to Figure~\ref{fig:mutual-chain-counterexample}), 
such that:
\begin{itemize}
\item $x_0'(t) = x_0' = x_0$,

\item 
$x_{i+1}'(t) = r(x_i'(t))$ for all $0 \leq i \leq j < n$

\item and $x_{j+1}'(t) = x_{j+1}' = x_n$
\end{itemize}

\begin{figure}[tbh]
	\centering
	\includegraphics[scale = 1]{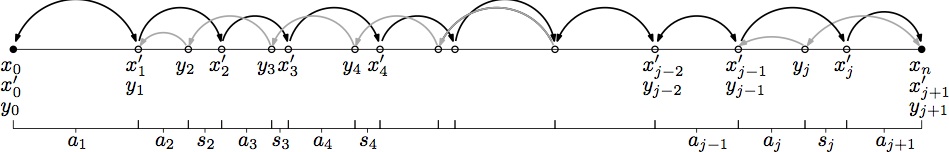}
	\caption{Illustration of the proof
of Theorem~\ref{thm:primary}.
	\label{fig:mutual-chain-counterexample}}
\end{figure}

We define another configuration of robots,  called {\em backward chain}, 
$\{y_0(t),y_1(t),...,y_{j+1}(t)\}\subseteq X(t)$
as follows.
Let $y_{j+1}(t) = x_{j+1}'(t)$
and for all $0\leq i \leq j$,
let $y_i(t) = l(y_{i+1}(t))$
(refer to Figure~\ref{fig:mutual-chain-counterexample}).
Let us call the union of the two chains {\em right-left chain}.
We now prove two useful  properties about the  right-left chain.
% the $y_i$'s.
\begin{description}
\item[Property 1] (Alternation Property.) 
For all $1\leq i \leq j+1$,
we have $x_{i-1}'(t) < y_i(t) \leq x_i'(t)$.
We prove this property by induction,
starting at $i = j+1$.
For the base case,
notice that $y_{j+1}(t) = x_{j+1}'(t)$ by definition.
Suppose that $x_{i-1}'(t) < y_i(t) \leq x_i'(t)$
for some $1\leq i \leq j+1$.
Then,
$y_{i-1}(t) = l(y_i(t)) \leq x_{i-1}'(t)$
otherwise this would contradict the fact
that $r(x_{i-1}'(t)) = x_i'(t)$.
Moreover,
$x_{i-2}'(t) < l(y_i(t)) = y_{i-1}(t)$
otherwise this would contradict the fact
that $r(x_{i-2}'(t)) = x_{i-1}'(t)$.

\item[Property 2] (Starting point Property.) 
We have that $y_0(t) = y_0 = x_0$.
Indeed,
\begin{align*}
y_0(t) &= l(y_1(t)) && \text{by the definition of $y_0(t)$,}\\
&\leq l(x_1'(t)) && \text{by Property 1,}\\
&\leq x_0,
\end{align*}
otherwise $x_1'(t)$ would not be the rightmost visible robot from $x_0 = x_0'$.
\end{description}

\paragraph{\bf 2) Convergence of forward and backward chains to a  right-left chain.}  Notice that since the forward chain  $\{x_0'(t),x_1'(t),...,x_{j+1}'(t)\}$
is not a mutual chain,
there exists an $i$ with $1 \leq i \leq j$
such that $x_{i-1}'(t) < y_i(t) < x_i'(t)$.
For all $1 \leq i \leq j+1$,
let $a_i(t) = y_i(t) - x_{i-1}'(t)$
and $s_i(t) = x_i'(t) - y_i(t)$.
Our aim, in the following, is to prove that  $x_i'(t)$ and $y_i(t)$ get arbitrarily close
  for $t$ going to infinity.
  
From Property 1,
we have $a_i(t) > 0$ and $s_i(t) \geq 0$
for all $1 \leq i \leq j+1$.
Moreover,
$s_i(t) = 0$
if and only if
$y_i(t) = x_i'(t)$.
Notice that $l(x_i'(t-1)) \leq x_{i-1}'(t-1)$,
otherwise there would be a contradiction with  the fact that
$r(x_{i-1}'(t-1)) = x_i'(t-1)$.
Therefore,
$$x_i'(t) = \frac{l(x_i'(t-1)) + r(x_i'(t-1))}{2} \leq \frac{x_{i-1}'(t-1) + x_{i+1}'(t-1)}{2} ,$$
from which
\begin{align}
\label{equationX}
x_i'(t) &\leq
\begin{cases}
0 & i=0 ,\cr
x_{i-1}'(t-1) + \frac{1}{2}(a_i(t-1)+s_i(t-1)+a_{i+1}(t-1)+s_{i+1}(t-1)) & 1 \leq i \leq j , \cr
x_n & i = j+1 . \cr
\end{cases}
\end{align}
Moreover,
notice that $r(y_i(t-1)) \geq y_{i+1}(t-1)$,
otherwise there would be a contradiction with  the fact that
$l(y_{i+1}(t-1)) = y_i(t-1)$.
Therefore,
$$y_i(t) = \frac{l(y_i(t-1)) + r(y_i(t-1))}{2} \geq \frac{y_{i-1}(t-1) + y_{i+1}(t-1)}{2} ,$$
from which
\begin{align}
\label{equationY}
y_i(t) &\geq
\begin{cases}
0 & i = 0 ,\cr
y_{i-1}(t-1) +\frac{1}{2}(s_{i-1}(t-1) + a_i(t-1)+s_i(t-1)+a_{i+1}(t-1)) & 1 \leq i \leq j ,\cr
x_n & i = j+1 .\cr
\end{cases}
\end{align}

Since $s_i(t) = x_i'(t) - y_i(t) $,
by subtracting~\eqref{equationY} from~\eqref{equationX} we obtain
\begin{align}
\label{equationSi}
s_i(t) &\leq
\begin{cases}
0 & i=0 , \cr
\frac{1}{2}(s_{i-1}(t-1) + s_{i+1}(t-1)) & 1 \leq i \leq j , \cr
0 & i = j+1 . \cr
\end{cases}
\end{align}
We are now ready to prove that   for all $0 \leq i \leq j+1$,
$s_i(t) \rightarrow 0$ as $t\rightarrow \infty$, implying that
   $y_i(t) \rightarrow x_i'(t)$ as $t\rightarrow \infty$.
Notice that we already have $y_0(t) = x_0'(t)$
and $y_{j+1}(t) = x_{j+1}'(t)$ by definition.
We then have:
\begin{align*}
s_i(t) &\leq \frac{1}{2}(s_{i-1}(t-1) + s_{i+1}(t-1)) \\
&\leq \frac{1}{4}(s_{i-2}(t-2) + 2s_{i}(t-2) + s_{i+2}(t-2)) \\
&\leq \frac{1}{8}(s_{i-3}(t-3) + 3 s_{i-1}(t-3) + 3 s_{i+1}(t-3) + s_{i+3}(t-3)) \\
&\leq \frac{1}{16}(s_{i-4}(t-4) +4 s_{i-2}(t-4) + 6 s_i(t-4) + 4s_{i+2}(t-4) + s_{i+4}(t-4)) \\
&\vdots \\
&\leq \frac{1}{2^t}\sum_{k = 0}^t \binom{t}{k} s_{i-t+2k}(0),
\end{align*}
where $s_i(t) = 0$ for all $i \leq 0$ and $i \geq j+1$.

In order to determine the limit of $s_i(t)$
when $t \rightarrow \infty$,
we need to make a few observations.
First of all, the $s_i(t)$'s in the summation with 
$i \leq 0$ or $i \geq j+1$
are all equal to zero.
In other words, regardless of the value of $t$,
there are at most $j$ non-zero values in the summation.
These $j$ values correspond to the $j$-central binomial coefficients.
Also note that since the segment delimited by the two faulty robots has a constant size, the values of the $s_i$'s are bounded. Let $C$ be the value of the largest such $s_i$ ever occurring. Since the largest binomial coefficient is the central one (or the central ones for odd values of $t$),
we can write
$$
0 \leq s_i(t) \leq \frac{1}{2^t} \, j \,
\binom{t}{\lfloor\frac{t}{2}\rfloor} \, C
.
$$
Since\footnote{We write $f(t) \sim g(t)$ whenever
$\lim_{t\rightarrow\infty} \frac{f(t)}{g(t)} = 1$.}
$\binom{t}{\lfloor \frac{t}{2} \rfloor} \sim \frac{2^t}{\sqrt{ \pi \frac{t}{2}}}$,
we have
$$0 \leq \lim_{t \rightarrow \infty} s_i(t)
\leq \lim_{t \rightarrow \infty} \frac{1}{2^t} \, j \,
\binom{t}{\lfloor\frac{t}{2}\rfloor} \, C = \lim_{t \rightarrow \infty} \frac{1}{2^t} \, j \,
\frac{2^t}{\sqrt{ \pi \frac{t}{2}}} \, C
= 0,$$
from which $\lim_{t \rightarrow \infty} s_i(t) = 0$.

We are ready to derive a contradiction.

\paragraph{\bf 3) Length of the segment strictly smaller than   $(j+1)V$.} 
Since the right-left chain is not a mutual chain,
and $x_0$ and $x_n$ are not moving,
the distance between $x_0$ and $x_n$ must be strictly smaller than $(j+1)V$ (otherwise
$x'_{j}$ and $y_j$ would necessarily coincide, for all $j$).
So, there exists a real number $\delta > 0$ such that
$x_n - x_0 = (j+1)V - \delta$.

%Since $\{x_0'(t),x_1'(t),...,x_{j+1}'(t)\}$
%is not a mutual chain,
%and $x_0$ and $x_n$ are faulty,
%the distance between $x_0$ and $x_n$ must be strictly smaller than $(j+1)V$, \col{otherwise
%$x'_{j}$ and $y_j$ would merge, as well as $y_2$ and $x'_2$ and all the other $x'_i$, $y_i$. }\colo{do we need to explain more ?}
%\col{J-Lou : Well... it's just that the only way for the length of the segment to be $(j+1)V$, is to have that the distance between any two consecutive $x_i'$ is $V$. But then, the $y_i'$ are literally equal to the $x_i'$.  }}
%So, there exists a real number $\delta > 0$
%such that for all $t\geq 0$,
%$x_n(t) - x_0(t) = (j+1)V - \delta$. 

\paragraph{\bf  4) Distance between $x'_i(t)$ and  $x'_{i+1}(t)$ tending to $V$.} 
Let us consider any sub-chain of the  right-left chain for which the $x'_i$ and the $y_i$ are distinct except for the extremal ones.
More precisely, let $\alpha$ and $\beta$ be two indices such that
$x_{\alpha}' = y_{\alpha}$, $x_{\beta}' = y_{\beta}$
and $x_i' \neq y_i$ for all $\alpha < i < \beta$
(refer to Figure~\ref{fig:mutual-chain-propagation-V}).
\begin{figure}[tbh]
	\centering
	\includegraphics[scale = 1]{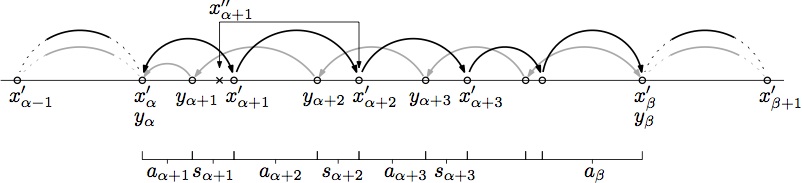}
	\caption{Illustration of the contradiction in the proof
of Theorem~\ref{thm:primary} (propagation of distance $V$).
We do not make any assumption about $x_{\alpha-1}'$
being equal or not to $y_{\alpha-1}$,
nor about $x_{\beta+1}'$
being equal or not to $y_{\beta+1}$.
	\label{fig:mutual-chain-propagation-V}}
\end{figure}
Notice that
$l(x_{\alpha+1}') = x_{\alpha}'$,
otherwise this would contradict the fact that
$l(y_{\alpha+1}) = x_{\alpha}'$.
We also have
$r(y_{\beta-1}) = x_{\beta}'$,
otherwise
this would contradict the fact that
$r(x_{\beta-1}') = x_{\beta}'$.
Therefore,
$l(x_{\alpha+1}') = x_{\alpha}'$, $r(x_{\alpha+1}') = x_{\alpha+2}'$,
$l(y_{\beta-1}) = y_{\beta-2}$ and $r(y_{\beta-1}) = y_{\beta} = x_{\beta}'$.
This implies that $k\geq i+3$,
otherwise $x_{\alpha+1}'$ and $y_{\beta-1}$
would have the same leftmost and rightmost
visible robots and they would merge in one step,
which is not possible at a size-stable time.
Since there cannot be any merging,
given that $l(y_{\alpha+1}) = y_{\alpha} = x_{\alpha}'$,
we must also have that
$x_{\alpha+2}'$ is not visible from $y_{\alpha+1}$ at any time.
Therefore,
for all $t\geq 0$,
$s_{\alpha+1}(t) + a_{\alpha+2}(t) + s_{\alpha+2}(t) > V$.
Since $r(x_{\alpha+1}') = x_{\alpha+2}'$,
for all $t\geq 0$,
$a_{\alpha+2}(t) + s_{\alpha+2}(t) \leq V$.
Together with the fact that
$s_{\alpha+1}(t) \rightarrow 0$
and $s_{\alpha+2}(t) \rightarrow 0$
as $t \rightarrow \infty$,
we get that $a_{\alpha+2}(t) \rightarrow V$
as $t \rightarrow \infty$.
Therefore,
$x_{\alpha+2}'(t) - x_{\alpha+1}'(t) \rightarrow V$
as $t \rightarrow \infty$.

Our goal is to apply Lemma~\ref{lemma:distance-propagation}
and conclude that
$x_{\alpha+1}'(t) - x_{\alpha}' \rightarrow V$
and $x_{\alpha+3}'(t) - x_{\alpha+2}' \rightarrow V$
as $t \rightarrow \infty$.
However,
since $x_{\alpha+1}'(t)$ and $x_{\alpha+2}'(t)$
are not mutual,
we cannot apply the lemma directly.
Here is the idea we use to circumvent this problem.
We can prove that there is a robot $x_{\alpha+1}''(t)$,
satisfying
$y_{\alpha+1}(t) \leq x_{\alpha+1}''(t) \leq x_{\alpha+1}'(t)$,
that is mutually chained with $x_{\alpha+2}'(t)$.
Intuitively,
since $y_{\alpha+1}(t)\rightarrow x_{\alpha+1}'(t)$
as $t\rightarrow\infty$,
and since $x_{\alpha+1}''(t) \in [y_{\alpha+1}(t), x_{\alpha+1}'(t)]$,
$x_{\alpha+1}''$ behaves the same way $x_{\alpha+1}'$ does.
But since $x_{\alpha+1}''(t)$ is mutually chained with $x_{\alpha+2}'(t)$,
we can apply Lemma~\ref{lemma:distance-propagation}.
Formally, let
$x_{\alpha+1}''(t) = l(x_{\alpha+2}'(t))$.
Notice that we must  have the following:
$x_{\alpha+1}''(t) \leq x_{\alpha+1}'(t)$, $y_{\alpha+1}(t) \leq x_{\alpha+1}''(t)$, $r(x_{\alpha+1}''(t)) = x_{\alpha+2}'(t)$,
otherwise we would have  contradictions, respectively, with the following three facts:
$r(x_{\alpha+1}'(t)) = x_{\alpha+2}'(t)$, $l(y_{\alpha+2}(t)) = y_{\alpha+1}(t)$, and $r(x_{\alpha+1}'(t)) = x_{\alpha+2}'(t)$.
%Notice that $x_{\alpha+1}''(t) \leq x_{\alpha+1}'(t)$,
%otherwise it would contradict the fact that
%$r(x_{\alpha+1}'(t)) = x_{\alpha+2}'(t)$.
%Also,
%$y_{\alpha+1}(t) \leq x_{\alpha+1}''(t)$,
%otherwise it would contradict the fact that
%$l(y_{\alpha+2}(t)) = y_{\alpha+1}(t)$.
%Moreover,
%$r(x_{\alpha+1}''(t)) = x_{\alpha+2}'(t)$,
%otherwise it would contradict the fact that
%$r(x_{\alpha+1}'(t)) = x_{\alpha+2}'(t)$.
Since $y_{\alpha+1}(t) \rightarrow x_{\alpha+1}'(t)$
as $t\rightarrow \infty$,
then $x_{\alpha+1}''(t) \rightarrow x_{\alpha+1}'(t)$
as $t\rightarrow \infty$.
The fact that
$x_{\alpha+2}'(t) - x_{\alpha+1}' \rightarrow V$
as $t \rightarrow \infty$
therefore implies that
$x_{\alpha+2}'(t) - x_{\alpha+1}''(t) \rightarrow V$
as $t \rightarrow \infty$.
By Lemma~\ref{lemma:distance-propagation},
$x_{\alpha+1}'(t) - x_{\alpha}' \rightarrow V$
and $x_{\alpha+3}'(t) - x_{\alpha+2}' \rightarrow V$
as $t \rightarrow \infty$.

By the previous argument,
the fact that
$x_{\alpha+2}'(t) - x_{\alpha+1}' \rightarrow V$
as $t\rightarrow \infty$
propagates to $x_{\alpha+1}'(t) - x_{\alpha}'$
and $x_{\alpha+3}'(t) - x_{\alpha+2}'$.
We can repeat the same argument and show that
this propagates to all $x_i'$'s,
from which we get that
for all $0 \leq i \leq j$,
$x_{i+1}'(t) - x_i' \rightarrow V$
as $t\rightarrow \infty$.
Therefore,
the total distance between $x_0$ and $x_n$ is arbitrarily close to $(j+1)V$.
%Therefore,
%$x_n(t) - x_0(t)$ gets arbitrarily close to $(j+1)V$
%when $t\rightarrow \infty$.
This contradicts the fact that
$x_n(t) - x_0(t) = (j+1)V - \delta$
for all $t\geq 0$.
\end{proof}

In the proof of Theorem~\ref{thm:primary},
we showed the existence of a unique mutual chain
called the primary chain.
Intuitively,
we say that a configuration of robots is a secondary chain
if it is a mutual chain anchored at two robots
that belong to the primary chain.
However,
such a configuration is not necessary unique
(refer to Figure~\ref{fig:example-secondary}
for an example).
\begin{figure}[tbh]
\centering
\includegraphics[scale = 1]{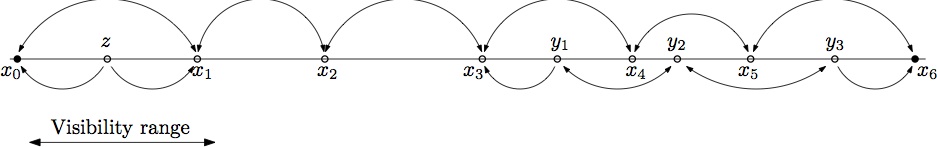}
\caption{An example of a primary chain
$\{x_0, x_1, \ldots x_6\}$
with two level-$2$ chains:
$\{z\}$ (anchored at $x_0$ and $x_1$)
and $\{y_1,y_2,y_3\}$ (anchored at $x_3$ and $x_6$).}
\label{fig:example-secondary}
\end{figure}
Level-$j$ chains (for $j>2$)
are defined in a similar way.
\begin{definition}[Secondary Chains and Level-$j$ Chains]\
\label{def-level-j}
\begin{itemize}
\item The primary chain is called a \emph{level-$1$ chain}.

\item A configuration of robots $C$ is a \emph{secondary chain} if it is a mutual chain anchored at two robots $x$ and $x'$,
such that 
$x,x' \in \mathcal{C}_1$ and and least one of $x$ and $x'$ is non-faulty.
We say that a secondary chain is a \emph{level-$2$ chain}.

\item A configuration of robots $C$ is a \emph{level-$j$ chain} if it is a mutual chain anchored at two robots $x$ and $x'$ which satisfy the following property.
There exists an index $j' < j$ such that one of the following two statements is true:
\begin{itemize}
\item $x$ is part of a level-$j'$ chain and $x'$ is part of a level-$(j-1)$ chain

\item or $x$ is part of a level-$(j-1)$ chain and $x'$ is part of a level-$j'$ chain.
\end{itemize}
\end{itemize}
\end{definition}

The convergence of the primary chain can be proven by observing
that the behaviour of the robots in the primary chain executing
 our algorithm ({\sc Convergence1D})
is    equivalent to the behavior
 they would have if they were executing  Algorithm  {\sc Spreading}. Once this is established, convergence follows from  
  Theorem  \ref{thm:generalizedCohen}.
%
% borrowing an existing result that 
%proves convergence to an equidistant distribution by a system of robots located on a closed segment where each robot moves, at each step, to the mid-point between the two \emph{closest ones}~\cite{CoPe08}.
%
The following lemma shows under what conditions Theorem  \ref{thm:generalizedCohen}   can be applied to a general mutual chain $Y(t) = \{y_1,y_2,\ldots, y_k\}$.
More specifically,
suppose that there exists two real numbers $y_0'$ and $y_{k+1}'$ such that $y_0(t) = l(y_1(t)) \rightarrow y_0'(t)$ and $y_{k+1}(t) = r(y_k(t)) \rightarrow y_{k+1}'$
as $t\rightarrow \infty$.
Then, by applying Algorithm Convergence1D,
$Y(t)$ converges towards an equidistant
configuration between $y_0'(t)$ and $y_{k+1}'(t)$.
\begin{lemma}
\label{lem:condi}
Let $Y(t) = \{y_1(t),y_2(t),\ldots, y_k(t)\}$ be a mutual chain at a size-stable time $t$,
anchored in $y_0(t)=l(y_1(t))$ and $y_{k+1}(t)=r(y_k(t))$,
where $y_0(t) \neq y_1(t)$ and $y_{k+1}(t) \neq y_k(t)$.
Suppose that there exist two numbers $y_0'$ and $y_{k+1}'$,
such that $y_0(t) \rightarrow y_0'$
and $y_{k+1}(t) \rightarrow y_{k+1}'$
as $t\rightarrow \infty$.
We have that,
for all $0\leq i \leq k+1$,
$$y_i(t) \rightarrow y_0' + \frac{|y_{k+1}'-y_0'|}{k+1}\,i$$
as $t\rightarrow\infty$. Therefore,
as $t\rightarrow\infty$, the robots in $\{y_1(t),y_2(t),\ldots, y_k(t)\}$ converge to a configuration where the distance between any two consecutive robots
is $\frac{|y_{k+1}'-y_0'|}{k+1}$.
\end{lemma}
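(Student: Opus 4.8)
The plan is to reduce the statement to Theorem~\ref{thm:generalizedCohen} by showing that, on a mutual chain, Algorithm {\sc Convergence1D} (moving to the midpoint of the two \emph{farthest} visible robots) produces exactly the same position recurrence as Algorithm {\sc Spreading} (moving to the midpoint of the two \emph{nearest} robots). Once this equivalence is established, the conclusion follows from the generalized spreading result, which already handles the harder case of converging rather than fixed endpoints.

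First I would establish the recurrence. By the definition of a mutual chain we have $l(y_{i+1}(t)) = y_i(t)$ and $r(y_i(t)) = y_{i+1}(t)$ for $1 \le i \le k-1$, and by the anchor definition $l(y_1(t)) = y_0(t)$ and $r(y_k(t)) = y_{k+1}(t)$; collecting these gives $l(y_i(t)) = y_{i-1}(t)$ and $r(y_i(t)) = y_{i+1}(t)$ for every $1 \le i \le k$. The assumptions $y_0(t) \neq y_1(t)$ and $y_{k+1}(t) \neq y_k(t)$ ensure the two anchors are genuine boundaries external to the moving robots. Since $t$ is a size-stable time, Theorem~\ref{thm:preserved-farthest} guarantees that these farthest-neighbour relations persist for all $t' \ge t$, so the mutual-chain structure never breaks. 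Consequently, for every $t' \ge t$ and every $1 \le i \le k$,
\[
y_i(t'+1) = \frac{l(y_i(t')) + r(y_i(t'))}{2} = \frac{y_{i-1}(t') + y_{i+1}(t')}{2},
\]
which is precisely the update rule governing the interior robots in Theorem~\ref{thm:generalizedCohen}.

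Next I would invoke Theorem~\ref{thm:generalizedCohen} under the identification $n = k+1$, with the moving robots $y_1,\ldots,y_k$ playing the role of $x_1,\ldots,x_{n-1}$ and the anchors $y_0, y_{k+1}$ playing the role of the converging endpoints $x_0, x_n$. The hypotheses are met: the recurrence above matches, by assumption $y_0(t') \to y_0'$ and $y_{k+1}(t') \to y_{k+1}'$, and boundedness of all $y_i(t')$ holds because the chain lies inside the fixed segment delimited by the two faulty robots. The theorem then yields $y_i(t') \to y_0' + \frac{i}{k+1}(y_{k+1}' - y_0')$ for $1 \le i \le k$, while $i = 0$ and $i = k+1$ simply reproduce the two endpoint hypotheses. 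Finally, since non-faulty robots never cross (Lemma~\ref{lem:no-crossing}) and the chain is ordered left to right, we have $y_0' \le y_{k+1}'$, so $y_{k+1}' - y_0' = |y_{k+1}' - y_0'|$; this matches the claimed limit and the equidistant spacing $\frac{|y_{k+1}' - y_0'|}{k+1}$.

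The main obstacle is the first step: recognizing and rigorously justifying that the farthest-neighbour dynamics of {\sc Convergence1D} collapses onto the nearest-neighbour {\sc Spreading} recurrence on a mutual chain. This rests entirely on the permanence of the mutual-chain structure supplied by Theorem~\ref{thm:preserved-farthest}, without which the neighbour relations (and hence the recurrence) could change over time. By contrast, the remaining analytic difficulty---the Fourier-based decay estimate---is already absorbed into Theorem~\ref{thm:generalizedCohen}, so no new convergence analysis is required here.
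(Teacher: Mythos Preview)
Your proposal is correct and follows essentially the same approach as the paper: use Theorem~\ref{thm:preserved-farthest} to freeze the farthest-neighbour relations along the mutual chain, observe that the {\sc Convergence1D} update then coincides with the {\sc Spreading} recurrence $y_i(t'+1) = \tfrac{1}{2}(y_{i-1}(t')+y_{i+1}(t'))$, and apply Theorem~\ref{thm:generalizedCohen}. Your write-up is in fact slightly more explicit than the paper's on two points the paper glosses over: the boundedness hypothesis needed for the Fourier estimate and the justification that $y_{k+1}'-y_0' = |y_{k+1}'-y_0'|$.
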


\begin{proof}
Let $Z(t) = \{z_0(t)=y_0(t), z_1(t),z_2(t),\ldots, z_m(t)= y_{k+1}(t)\}$ be the global configuration of robots
at time $t$,
restricted to the interval $[y_0(t),y_{k+1}(t)]$.

By Theorem~\ref{thm:preserved-farthest},
$Y(t)$ satisfies the following property:
for all $1\leq i \leq k$
and for all $t'\geq t$,
$l(y_i(t')) = l(y_i(t))$
and $r(y_i(t')) = r(y_i(t))$.
Therefore,
even if there is a robot $z_j(t) \in N(y_i(t))\setminus Y(t)$,
the  presence of  $z_j(t)$  has no impact on the position of $y_i(t+1)$.
Consequently,
the positions of the robots in $Y(t+1)$,
after executing Algorithm {\sc Convergence1D} on $Y(t)$,
are uniquely determined by the positions
of the robots in $Y(t)$.
Hence,
executing Algorithm {\sc Convergence1D} on $Y(t)$
produces the same result as
executing Algorithm {\sc Spreading} on $Y(t)$, and thus  the lemma follows from Theorem~\ref{thm:generalizedCohen}.
\end{proof}

We now show that the primary chain
$\mathcal{C}_1 = \{x_0',x_1',x_2',...,x_k'\}\subseteq X$,
where $x_0' = x_0$ and $x_k' = x_n$,
converges towards a configuration of equidistant robots delimited by its anchors $x_0$ and $x_n$.
\begin{theorem}[Convergence of the Primary Chain]\
\label{thm:PrimaryConvergence}
Let $\mathcal{C}_1 = \{x_0',x_1',x_2',...,x_k'\}$
be the primary chain.
We have that $x_0' = x_0$, $x_k' = x_n$
and for all $0 \leq i \leq k$
$$x_i'(t) \rightarrow \frac{|x_n-x_0|}{k}\,i$$
as $t\rightarrow \infty$.
\end{theorem}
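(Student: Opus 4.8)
The plan is to derive the statement as a direct application of Lemma~\ref{lem:condi} to the interior of the primary chain, so that all of the analytic heavy lifting is delegated to Theorem~\ref{thm:generalizedCohen}. By Theorem~\ref{thm:primary}, the primary chain $\mathcal{C}_1 = \{x_0', x_1', \ldots, x_k'\}$ is a mutual chain with $x_0' = x_0$, $x_k' = x_n$, and at every size-stable time its consecutive members satisfy $r(x_i'(t)) = x_{i+1}'(t)$ and $l(x_{i+1}'(t)) = x_i'(t)$. First I would dispose of the degenerate case $k = 1$, where there are no interior robots and the claim is immediate ($x_0'(t) = x_0 = 0$ and $x_1'(t) = x_n$). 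For $k \geq 2$, I would take $Y(t) = \{x_1'(t), \ldots, x_{k-1}'(t)\}$, a mutual chain of $k-1$ robots, and identify its left and right anchors with $x_0$ and $x_n$ via $l(x_1'(t)) = x_0$ and $r(x_{k-1}'(t)) = x_n$; this is precisely the input required by Lemma~\ref{lem:condi}, read with the lemma's chain length equal to $k-1$, so that the anchor indices are $0$ and $k$.

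Next I would check the three hypotheses of Lemma~\ref{lem:condi}. That $Y(t)$ is a mutual chain anchored at $x_0$ and $x_n$ is inherited directly from the mutual-chain structure of $\mathcal{C}_1$ supplied by Theorem~\ref{thm:primary}, with the farthest neighbours frozen after a size-stable time by Theorem~\ref{thm:preserved-farthest}. The anchor-convergence hypothesis is trivial here: $x_0$ and $x_n$ are the two faulty robots, so $x_0(t) = 0$ and $x_n(t)$ are constant and converge to $y_0' = 0$ and $y_k' = x_n$. For the non-degeneracy requirements $y_0(t) \neq y_1(t)$ and $y_k(t) \neq y_{k-1}(t)$, I would argue that $x_1' = r(x_0)$ must lie strictly to the right of $x_0$: were $x_1'$ co-located with $x_0$, then $x_0$ would see no robot further to its right and the chain could not reach $x_n$, contradicting $k \geq 2$; the right anchor is handled by a symmetric argument.

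With the hypotheses in place, Lemma~\ref{lem:condi} yields, for all $0 \leq i \leq k$,
$$x_i'(t) \;\longrightarrow\; y_0' + \frac{|y_k' - y_0'|}{k}\, i \;=\; \frac{|x_n - x_0|}{k}\, i$$
as $t \to \infty$, which is exactly the assertion (and correctly returns $0$ at $i=0$ and $x_n$ at $i=k$). I do not anticipate a deep obstacle, since Lemma~\ref{lem:condi} already packages the one genuinely delicate point — established in its proof through Theorem~\ref{thm:preserved-farthest} — that robots outside $\mathcal{C}_1$ (members of $N(x_i'(t))$ belonging to secondary or higher-level chains) never influence $x_i'(t+1)$, so that the primary chain evolves exactly as if it were running Algorithm {\sc Spreading} in isolation. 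The step most prone to error is therefore the index bookkeeping: the interior chain contains $k-1$ robots but spans $k$ gaps between the anchors, and it is this count that produces the spacing $|x_n - x_0|/k$ rather than $|x_n - x_0|/(k-1)$.
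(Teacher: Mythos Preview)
Your proposal is correct and follows essentially the same approach as the paper: strip off the faulty endpoints, observe that the interior $\{x_1',\ldots,x_{k-1}'\}$ is a mutual chain anchored at the stationary robots $x_0$ and $x_n$, and invoke Lemma~\ref{lem:condi}. You are in fact more thorough than the paper, which omits the degenerate case $k=1$ and the non-degeneracy check $x_0'\neq x_1'$, $x_{k-1}'\neq x_k'$.
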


\begin{proof}
Since $\mathcal{C}_1$ is a mutual chain,
the configuration $\{x_1',x_2',...,x_{k-1}'\}$ is also a mutual chain. It is anchored at $x_0'$ and $x_k'$,
where $x_0' \neq x_1'$ and $x_k' \neq x_{k-1}'$.
Since the anchors $x_0' = x_0 = 0$ and $x_k' = x_n$
are faulty,
they do not move.
Hence,
$x_0'(t) \rightarrow x_0$
and $x_k'(t) \rightarrow x_n$
as $t\rightarrow \infty$.
Thus,
the theorem follows directly from
Lemma~\ref{lem:condi}.
\end{proof}

We now show that every level-$j$ chain converges towards
a configuration of equidistant robots. 
\begin{theorem}[Convergence of Level-$j$ Chains]\
\label{thm:level-j}
Let $C_j = \{y_1,y_2,\ldots, y_k\}$ be a level-$j$ chain,
where $j\geq 1$ is an integer.
Let $t$ be a size-stable time.
Let $y_0(t) = l(y_1(t))$ and $y_{k+1}(t) = r(y_k(t))$.
There exist real numbers $y_0'$ and $y_{k+1}'$
such that $y_0(t) \rightarrow y_0'$
and $y_{k+1}(t) \rightarrow y_{k+1}'$
as $t\rightarrow \infty$.
Moreover,
for all $0 \leq i \leq k+1$,
$$y_i(t) \rightarrow y_0' + \frac{|y_{k+1}'-y_0'|}{k+1}\,i$$
as $t\rightarrow\infty$.
\end{theorem}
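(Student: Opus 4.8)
The plan is to prove the statement by strong induction on the level $j$, using Lemma~\ref{lem:condi} as the engine that converts convergence of the two anchors into equidistant convergence of the chain, and using the hierarchical structure of Definition~\ref{def-level-j} to supply the anchor convergence from strictly lower levels.

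For the base case $j=1$, the level-$1$ chain is the primary chain, whose anchors are the two faulty robots $x_0$ and $x_n$. Because faulty robots never move, the anchors are fixed and hence converge trivially, while the equidistant convergence of the remaining robots of the chain is exactly the content of Theorem~\ref{thm:PrimaryConvergence}. (Note that one cannot feed the whole primary chain directly into Lemma~\ref{lem:condi}, since its anchors coincide with its faulty endpoints; this is precisely why Theorem~\ref{thm:PrimaryConvergence} is proved separately by first passing to the interior sub-chain.)

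For the inductive step I would assume the statement for every chain of level strictly smaller than $j$ and take a level-$j$ chain $C_j=\{y_1,\ldots,y_k\}$ with anchors $y_0(t)=l(y_1(t))$ and $y_{k+1}(t)=r(y_k(t))$. By Definition~\ref{def-level-j}, both anchors lie on chains of level strictly less than $j$ (at least one of them on a level-$(j-1)$ chain), and by Theorem~\ref{thm:preserved-farthest} the identities of the anchor robots are frozen after the size-stable time, so $y_0(t)$ and $y_{k+1}(t)$ each track a single robot for all $t$. Applying the induction hypothesis to the lower-level chains containing these two robots gives limits $y_0(t)\to y_0'$ and $y_{k+1}(t)\to y_{k+1}'$ (a faulty anchor, if present, is handled immediately since it is fixed). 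With these two limits established, Lemma~\ref{lem:condi} applied to $C_j$ yields $y_i(t)\to y_0'+\frac{|y_{k+1}'-y_0'|}{k+1}\,i$ for all $0\le i\le k+1$, which is the desired conclusion.

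The part that will need the most care is verifying, at each inductive step, that Lemma~\ref{lem:condi} is actually applicable: its hypotheses demand that the anchors be distinct from the chain's endpoints, i.e. $y_0(t)\neq y_1(t)$ and $y_{k+1}(t)\neq y_k(t)$. I would establish this non-degeneracy from Definition~\ref{def-level-j}, arguing that since the anchors sit on strictly lower-level chains they cannot be the level-$j$ endpoints $y_1$ and $y_k$ of $C_j$ itself. A second point to state explicitly is that the induction is well-founded: the level is a finite positive integer and the anchor relation strictly decreases it, so the recursion cannot cycle and must bottom out at the primary chain, legitimising the strong induction. Both of these are structural rather than analytic, since all the genuine limiting analysis has already been carried out in Lemma~\ref{lem:condi} and Theorems~\ref{thm:generalizedCohen} and~\ref{thm:PrimaryConvergence}.
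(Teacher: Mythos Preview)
Your proposal is correct and follows essentially the same approach as the paper's proof: strong induction on $j$, with the base case delegated to Theorem~\ref{thm:PrimaryConvergence} and the inductive step obtained by applying the induction hypothesis to the anchors (which lie on lower-level chains by Definition~\ref{def-level-j}) and then invoking Lemma~\ref{lem:condi}. You are in fact slightly more careful than the paper, which applies Lemma~\ref{lem:condi} in the inductive step without explicitly checking the non-degeneracy hypotheses $y_0(t)\neq y_1(t)$ and $y_{k+1}(t)\neq y_k(t)$.
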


\begin{proof}
We proceed by induction on $j$.
By Theorem~\ref{thm:PrimaryConvergence},
our statement  is true for $j=1$.
Suppose that the theorem is true for all integers from $1$ to $j-1$. Consider a level-$j$ chain
$C_j = \{y_1,y_2,\ldots, y_k\}$
anchored at $y_0(t) = l(y_1(t))$
and $y_{k+1}(t) = r(y_k(t))$,
where $t$ is a size-stable time.

By Defintion~\ref{def-level-j},
there exists an index $j' < j$ such that one of the following two statements is true:
\begin{itemize}
\item $y_0$ is part of a level-$j'$ chain and $y_{k+1}$ is part of a level-$(j-1)$ chain

\item or $y_0$ is part of a level-$(j-1)$ chain and $y_{k+1}$ is part of a level-$j'$ chain.
\end{itemize}
Without loss of generality,
suppose that $y_0$ is part of a level-$j'$ chain and $y_{k+1}$ is part of a level-$(j-1)$ chain.

By the induction hypothesis,
there exist two real numbers $y_0'$ and $y_{k+1}'$
such that $y_0(t) \rightarrow y_0'$
and $y_{k+1}(t) \rightarrow y_{k+1}$
as $t\rightarrow \infty$.
The theorem follows from
Lemma~\ref{lem:condi}.
\end{proof}

The following lemma states that every robot belongs to some
level-$j$ chain.
To simplify the presentation,
we assume that the faulty robot $x_0$ is part of the \emph{level-$0$} chain $\{x_0\}$
and that the faulty robot $x_n$ is part of the \emph{level-$0$} chain $\{x_n\}$.
\begin{lemma}
\label{lem:any}
For all size-stable time $t$
and all $0 \leq i \leq n$,
$x_i(t) \in X(t)$ belongs to a level-$j$ chain.
\end{lemma}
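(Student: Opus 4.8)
The plan is to place every robot into the hierarchy by assigning a \emph{level} to the unique maximal mutual chain that contains it. First I would exploit that, at a size-stable time, the maps $r(\cdot)$ and $l(\cdot)$ are frozen (Theorem~\ref{thm:preserved-farthest}), so ``being mutually chained'' is a fixed relation on $X(t)$ in which each robot has at most one mutual left neighbour and at most one mutual right neighbour. Consequently this relation decomposes $X(t)$ into disjoint \emph{maximal mutual chains}: its connected components, which are paths because the robots are linearly ordered and keep their order by Lemma~\ref{lem:no-crossing}. Every robot thus lies in exactly one maximal mutual chain, so it suffices to exhibit, for each such chain, a level $j$ in the sense of Definition~\ref{def-level-j}. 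As in the proof of Theorem~\ref{thm:primary}, I would first dispose of the at most two outsiders (Lemma~\ref{lem:at most two outsiders}): since they converge to $x_0$ and $x_n$, I assume $X(t)\subset[x_0,x_n]$.

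Next I would set up the assignment recursively. The base cases are the faulty singletons $\{x_0\}$ and $\{x_n\}$ (level $0$) and the primary chain $\mathcal{C}_1$ (level $1$), which exists and is unique by Theorem~\ref{thm:primary} and is itself a maximal mutual chain anchored at the two level-$0$ robots $x_0,x_n$. For any other maximal mutual chain $M$ with anchors $a_L=l(p)$ and $a_R=r(q)$, where $p,q$ are the leftmost and rightmost robots of $M$, I would prove that $a_L<p$ and $a_R>q$ strictly: if $a_L=p$ then $l(p)=p$ forces $p=x_0$, whence $M=\mathcal{C}_1$ by uniqueness, a contradiction. Hence $a_L,a_R\notin M$, so they lie in maximal mutual chains $M_L\neq M$ and $M_R\neq M$. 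Assuming these already carry levels $p_L$ and $p_R$, I set $\mathrm{lev}(M)=1+\max(p_L,p_R)$; writing $j=1+\max(p_L,p_R)$ and $j'=\min(p_L,p_R)$, one anchor lies in a level-$(j-1)$ chain and the other in a level-$j'$ chain with $j'<j$, which is exactly Definition~\ref{def-level-j}. The extra level-$2$ requirement that an anchor be non-faulty is automatic: a chain with two faulty anchors would be anchored at $x_0$ and $x_n$ and hence equal $\mathcal{C}_1$ by uniqueness, so any $M\neq\mathcal{C}_1$ has at most one faulty anchor.

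The point of grouping robots into maximal mutual chains is that it breaks the circularity a naive per-robot assignment would suffer (two mutually chained robots would depend on each other): once grouped, the level of $M$ depends only on its \emph{external} anchors, which live in \emph{different} chains. The main obstacle, and the technical heart of the proof, is to show that this recursion is well-founded---equivalently, that the anchor relation $M\mapsto\{M_L,M_R\}$ on the finite set of maximal mutual chains is acyclic and bottoms out at $\mathcal{C}_1$ and the level-$0$ singletons. I would handle this with a potential $\Phi$ on maximal mutual chains that is maximised by $\mathcal{C}_1$ and strictly increases along every anchor edge; the natural candidate is the anchor interval $[a_L,a_R]$ (or the number of robots it contains), with $\mathcal{C}_1$ realising the maximal interval $[x_0,x_n]$. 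Then levels can be assigned by induction from the top of $\Phi$ downward.

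The delicate part is proving the strict monotonicity of $\Phi$ from Theorem~\ref{thm:preserved-farthest} and the no-crossing lemmas: one direction is easy, since $a_L\in M_L$ forces the left anchor of $M_L$ to be $\le a_L$ and, symmetrically, the right anchor of $M_R$ to be $\ge a_R$, but controlling the \emph{other} endpoint of each anchor-chain (so that the intervals genuinely grow rather than merely shift) requires a careful use of farthest-visibility. An equivalent and perhaps cleaner route, which I would keep in reserve, is to argue by contradiction: assume a cycle $M^{(1)}\to M^{(2)}\to\cdots\to M^{(1)}$ of as-yet-unassigned chains and derive a contradiction from the fact that each anchor lies \emph{strictly} outside its own chain on the line, a nesting that cannot be sustained around a cycle. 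Either way, once well-foundedness is in hand, the assigned $\mathrm{lev}(M)$ matches Definition~\ref{def-level-j}, and every robot---lying in some maximal mutual chain---belongs to a level-$j$ chain.
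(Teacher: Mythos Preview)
Your approach is genuinely different from the paper's, and the decomposition of $X(t)$ into maximal mutual chains is a natural and clean idea. The reduction is sound: at a size-stable time the relation ``mutually chained'' is frozen, each robot has at most one mutual neighbour on each side, and the connected components are paths; so it does suffice to assign a level to every maximal mutual chain. The argument that any $M\neq\mathcal{C}_1$ has both anchors strictly outside itself is also correct (using connectivity of the visibility graph, which the setting assumes).

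The gap is exactly where you place it, and it is real: well-foundedness of the anchor recursion. The proposed potential $\Phi(M)=[a_L,a_R]$ does not obviously grow along anchor edges. You correctly note that the left anchor of $M_L$ lies strictly left of $a_L$; but the right anchor of $M_L$ need not lie right of $a_R$. If, say, $a_L=l(p)$ happens to be the rightmost robot of $M_L$, then the right anchor of $M_L$ is $r(a_L)$, and from $a_L$ seeing $p$ but not being mutually chained with it you only get $r(a_L)>p$, not $r(a_L)\geq a_R$. More generally, following $r(\cdot)$ inside $M_L$ from $a_L$ produces an interleaving $a_L<p=p_1<m_1'<p_2<m_2'<\cdots$ with the robots $p_1,p_2,\ldots$ of $M$, but $M_L$ may run out before $M$ does, and then its right anchor can land strictly inside $[a_L,a_R]$. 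So the interval potential can shift rather than strictly enlarge. Your reserve route---ruling out cycles directly---is plausible, but the obvious $2$-cycle case $M_1\to_L M_2\to_R M_1$ already forces one into exactly this interleaving bookkeeping (tracking how far $M_2$ threads through $M_1$ and where $r(q_{M_2})$ must land) before a contradiction emerges; you have not carried that out.

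The paper sidesteps the global potential entirely. It argues by contradiction from the \emph{leftmost} robot $y_1$ not in any level-$j$ chain, builds a forward chain $y_1,y_2,\ldots,y_k$ by iterating $r(\cdot)$ until it hits a robot whose $r$-image already lies in a mutual chain, pairs this with a backward chain $z_k,z_{k-1},\ldots$ via $l(\cdot)$, and then reuses essentially verbatim the forward/backward-chain convergence argument of Theorem~\ref{thm:primary} to force a contradiction. The ``leftmost counterexample'' choice is what makes this work: it guarantees that $l(y_1)$ already lies in a chain with a level, giving the needed foothold. Your structural picture is more elegant, but closing it would likely require reproducing a variant of that same interleaving analysis to establish acyclicity---at which point the two proofs largely converge.
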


\begin{proof}
Suppose that the statement is false.
Let $y_1(t)$ be the leftmost robot
that does not satisfy the statement.
We will derive a contradiction.

Since the leftmost robot $x_0$ is faulty,
$l(y_1(t))$ belongs to a mutual chain,
say $C(t) = \{x_1'',x_2'',...,x_m''\}$,
where $l(y_1(t))=x_{\alpha}''$
for some index $1\leq \alpha\leq m$.
Let $Y = \{y_1,y_2,...,y_k\}$ be the configuration of robots
such that
(refer to Figure~\ref{fig:every-robot-belongs})
\begin{figure}[tbh]
\centering
\includegraphics[scale = 1]{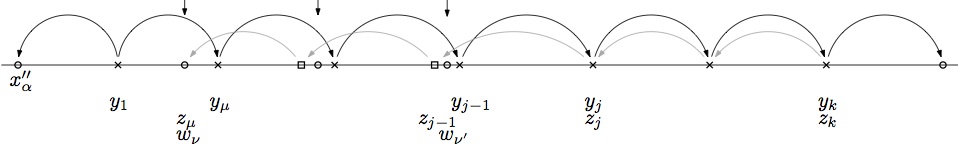}
\caption{Illustration of the proof of Lemma~\ref{lem:any}.
\label{fig:every-robot-belongs}}
\end{figure}
\begin{enumerate}
\item $y_i(t) = r(y_{i-1}(t))$ for all $2\leq i \leq k$,

\item\label{EveryRobotBelongs-item2}
$r(y_k(t))$ belongs to a mutual chain

\item\label{EveryRobotBelongs-item3}
and for all $1\leq i \leq k$, $y_i(t)$
does not belong to a mutual chain.
\end{enumerate}
Observe that the definition of $Y$
allows $k$ to be equal to $1$
(in such a case,
only items~\ref{EveryRobotBelongs-item2}
and~\ref{EveryRobotBelongs-item3} apply).
By construction and the definition of $y_1(t)$,$\{y_1(t),y_2(t),...,y_k(t)\}$ is not a mutual chain.
Therefore,
for the rest of the proof,
$k \geq 2$.

Let $\{z_1,z_2,...,z_k\}$
be the configuration of robots such that
$z_k = y_k$ and
$z_i(t) = l(z_{i+1}(t))$ for all $1\leq i\leq k-1$.
Using the same arguments as in
the proof of Theorem~\ref{thm:primary},
we get that $x_{\alpha}'' \leq z_1 \leq y_1$
and $y_{i-1} < z_i \leq y_i$
for all $2 \leq i \leq k$.
Since $\{y_1(t),y_2(t),...,y_k(t)\}$
is not a mutual chain,
there is an index $i$ such that $z_i(t) \neq y_i(t)$.
Let $j$ be the smallest index such that
$z_j = y_j$ and $z_{j-1} \neq y_{j-1}$.
Suppose there is an index $\gamma < j-1$
such that $z_{\gamma}(t) = y_{\gamma}(t)$.
Therefore, by the definition of $j$,
$z_i = y_i$ for all $1\leq i \leq \gamma$.
Moreover,
$x_{\alpha}''$ and $r(y_k)$ are part of mutual chains.
Therefore,
by Theorems~\ref{thm:PrimaryConvergence}
and~\ref{thm:level-j},
$x_{\alpha}''(t)$ and $r(y_k)(t)$ converge to a fixed location as $t\rightarrow \infty$.
Consequently,
we get the same contradiction as in the proof
of Theorem~\ref{thm:primary}.
Hence,
for the rest of the proof,
assume that $z_i(t) \neq y_i(t)$
for all $1 \leq i < j-1$.

We have the following property.
\begin{description}
\item[Property 1]
If,
for all $2\leq i \leq j-1$,
$z_i(t)$ does not belong to any mutual chain,
then $z_1(t) = l(z_2(t))$ belongs to a mutual chain.
Indeed,
we must have $z_1(t) \leq y_1(t)$ otherwise this would contradict the fact that $y_2(t) = r(y_1(t))$.
Moreover,
we assumed that $z_i(t) \neq y_i(t)$
for all $1 \leq i < j-1$.
Hence,
$z_1(t) < y_1(t)$.
Moreover,
we must have $z_1(t) \geq x_{\alpha}''(t)$ otherwise this would contradict the fact that $x_{\alpha}'' = l(y_1(t))$.
But then,
since $y_1(t)$ is the leftmost robot 
that does not belong to a mutual chain,
we must have that $z_1(t) = l(z_2(t))$
belongs to a mutual chain.
\end{description}
Consequently,
there is an index $1\leq i \leq j-1$
such that $z_i$ belongs to a mutual chain.
Let $1\leq \mu \leq j-1$ be the largest index
such that $z_{\mu}$ belongs to a mutual chain,
say $W = \{w_1,w_2,...,w_{m'}\}$.
Let $1 \leq \nu \leq m'$ be the index such that
$w_{\nu} = z_{\mu}$.

We have the following property.
\begin{description}
\item[Property 2]
$z_{\mu+1} < w_{\nu+1} < y_{\mu+1}$.
Indeed,
observe that $w_{\nu+1} = r(w_{\nu})$.
Therefore,
$w_{\nu+1} \leq y_{\mu+1}$
otherwise this would contradict the fact that
$y_{\mu+1} = r(y_{\mu})$.
Moreover,
by definition,
$w_{\nu+1} \neq y_{\mu+1}$.
We also have that
$w_{\nu+1} \geq z_{\mu+1}$
otherwise this would contradict the fact that
$z_{\mu} = l(z_{\mu+1})$.
Moreover,
by definition,
$w_{\nu+1} \neq z_{\mu+1}$.
\end{description}
By repeating the argument for proving Property 2,
we reach the index $\nu'$ such that
$z_{j-1} < w_{\nu'}' < y_{j-1}$.
Observe that $w_{\nu'+1} = r(w_{\nu'}) \leq y_j$
otherwise this would contradict the fact
$y_j = r(y_{j-1})$.
Moreover,
$w_{\nu'+1} \geq y_j = z_j$
otherwise this would contradict the fact
$z_{j-1} = l(z_j)$.
Therefore,
$w_{\nu'+1} = y_j$.
However,
by the definition of $Y$,
$y_j$ is not part of a mutual chain.
We get a contradiction.
\end{proof}

From Theorems~\ref{thm:PrimaryConvergence}
and~\ref{thm:level-j},
and Lemma~\ref{lem:any},
we can conclude with the following theorem. 
\begin{theorem}[Global Convergence]\
\label{thm:convergence}
For all $0\leq i \leq n$,
$|x_i(t+1) - x_i(t)| \rightarrow 0$
as $t\rightarrow \infty$.
Therefore,
$X(t)$ converges towards a fixed configuration
as $t\rightarrow \infty$.
\end{theorem}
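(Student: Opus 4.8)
The plan is to treat this theorem as the synthesis step that glues together the per-chain convergence results with the structural decomposition of the configuration, so that the convergence of every individual robot reduces to the convergence of the chain containing it. First I would fix a size-stable time, taken to be $0$ as stipulated after Corollary~\ref{cor:stable}, so that by Theorem~\ref{thm:preserved-farthest} the farthest left and right neighbours of every robot are frozen for all $t\ge 0$. This guarantees that the mutual-chain structure of Definition~\ref{def-level-j} is well defined and persistent, which is exactly what the downstream convergence theorems require.

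Next I would invoke Lemma~\ref{lem:any}: every robot $x_i$, for $0\le i\le n$, belongs to some level-$j$ chain, where the faulty robots $x_0$ and $x_n$ are the level-$0$ chains. Since $X$ contains only finitely many robots, only finitely many distinct chains can arise, and Definition~\ref{def-level-j} forces the anchors of a level-$j$ chain to lie in strictly lower-level chains; hence the hierarchy has finite depth and each robot is assigned a finite level. By Theorem~\ref{thm:PrimaryConvergence} (the base case $j=1$) together with Theorem~\ref{thm:level-j} (the inductive step), every level-$j$ chain converges to an equidistant configuration between the limits of its two anchors, so in particular each robot belonging to such a chain converges to a fixed position $p_i$. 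I would then dispose separately of the at most two robots that never enter $[x_0,x_n]$: by Lemma~\ref{lem:segment} these outsiders converge to $x_0$ and $x_n$ respectively, and so they too have fixed limits.

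Having established that $x_i(t)\to p_i$ for every $0\le i\le n$, the conclusion is immediate: a convergent real sequence has vanishing consecutive differences, hence $|x_i(t+1)-x_i(t)|\to 0$, and since every coordinate converges, $X(t)$ converges to the fixed configuration $\{p_0,p_1,\ldots,p_n\}$.

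I expect the only delicate point to be bookkeeping rather than new analysis, since the substantive work already lives in Theorem~\ref{thm:primary}, Theorem~\ref{thm:level-j} and Lemma~\ref{lem:any}. Specifically, one must check consistency of the limit assigned to a robot that lies in more than one chain; such a robot is necessarily a shared anchor, i.e.\ it is an interior member of some lowest-level chain $C$ while also serving as an anchor of one or more higher-level chains. I would resolve this by observing that its limit is already pinned down by the convergence of $C$, and that this same value is precisely the anchor-limit fed into the higher chains through Lemma~\ref{lem:condi}; thus all chains passing through that robot assign it one and the same limit, and the per-coordinate convergence is unambiguous.
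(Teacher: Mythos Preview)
Your proposal is correct and follows essentially the same approach as the paper: the paper's own proof is a one-sentence citation of Theorems~\ref{thm:PrimaryConvergence} and~\ref{thm:level-j} together with Lemma~\ref{lem:any}, and your argument is precisely this synthesis spelled out in detail. Your explicit treatment of the outsiders via Lemma~\ref{lem:segment} and your remark on anchor consistency are welcome clarifications that the paper leaves implicit.
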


%
%Theorem~\ref{thm:convergence} shows that any initial configuration will converge to a configuration with a very special form, which is   a fixed point of the algorithm. 
%Intuitively, a fixed point is composed 
%by compatible  families of equidistant robots, each family belonging
% to a  mutual chain of some level-$j$ chain set.
%We conclude by characterizing    
%what it means for the families to be \emph{compatible}.
% Let $chord(F)$ denote the distance between any two robots in family $F$. 
%We have that:
%% let $la(F)$ and $ra(F)$ denote the left and right anchor of family $F$.
%%In the next Section we will prove the following:
% 
%\begin{theorem}\label{thm:PairwiseComp}\emph{(Pairwise Compatibility  Condition)}\\
%Family  ${\cal F} = [x_{j,1}, \ldots,    x_{j,n_j}] \in { C_j}$ 
%%(with $d_{\cal F}= chord({\cal F}$) 
%is 
% compatible  with Family ${\cal G} = [x_{i,1}, \ldots,    x_{i,n_i}] \in { C_i}$, $i \leq j$  
%%(with $d_{\cal G}= chord({\cal G}$)
%iff  $\forall x_{j,l} \in {\cal F}$ and 
%$\forall x_{i,p} \in {\cal G}$, 
%$|x_{i,p} - x_{j,l} | >V$ or  
%$|x_{i,p} - x_{j,l} | \leq chord({\cal F})$. 
%  
%\end{theorem}
%
%\begin{theorem}\label{thm:GeneralComp}\emph{(General Compatibility  Condition)}\\
%Let ${\cal F}_1 \ldots  {\cal F}_k$  be the collection of all families.
%Family  ${\cal F}_p= [x_{p,1}, \ldots,    x_{p,n_j}] \in { C_j}$ 
%%(with $d_{\cal F}= chord({\cal F}$) 
%is   compatible  with all families if and only if it is pairwise compatible 
%with all lower level families.
% \end{theorem}

\section{Conclusion}
   %!TEX root = Fault-InducedDynamics.tex 

To study the impact of faults on the robots dynamics, in this paper we analyzed the behaviour of a group of oblivious robots 
which execute an algorithm designed for a fault-free environment in presence of undetectable crash faults. 
We focused on  the classic  point-convergence algorithm by Ando et al. \cite{AnOaSuYa99} executed on a line, when the robots are  synchronous and at most 
two of them are faulty.
 
 The paper leaves several open questions and  research directions. 
An obvious extension would be 
%to expand 
the study of  the  point-convergence algorithm   in  the case of more than  two faults: we know that the robots  still converge to a pattern, but the analysis is not simple and  left for further study. 
% Also the study of  {\sc Ssync} would be interesting. We already know that, in this case, the system does not converge to a stable pattern  from arbitrary configurations.
 %
When the robots operate fully synchronously in a  two dimensional space, the dynamics has a rather different nature: we have observed that
 oscillations are possible, even with just two faults and the study of this case  is undergoing.

More generally, this work can be seen as a first step toward the study  of 
%a wider new area of investigation that focuses on 
 the interaction between heterogeneous groups of robots operating in the same space, each   following a different algorithm. 
The existing literature on {\sc Look}-{\sc Compute}-{\sc Move} robots has always  considered 
robots 
%working  together toward  a common goal 
with the same set of rules.  The presence of different teams following different,   
possibly conflicting, rules  in the environment is an interesting new area of investigation.

{\small

}
\end{document}